\newtheorem{definition}{Definition}[section]
\newtheorem{lemma}[definition]{Lemma}
\newtheorem{theorem}[definition]{Theorem}
\newtheorem{corollary}[definition]{Corollary}
\newcounter{excount}
\newenvironment{proof}[1]{\par \textit{Proof:} \\ #1}{\par}
\newenvironment{remark}[1]{\par \textit{Remark:} \\ #1}{\par}
\newenvironment{concept}[2]{\par \textit{Main Concept: #1} \\ #2}{\par}
\newenvironment{example}[2]{\par \textit{Example \stepcounter{excount} \arabic{excount}: #1} \\ #2}{\par}
\newenvironment{keywords}[1]{\par\parindent0mm \textbf{Keywords:} #1}{\par}
\newcommand{\komtar}[2]{}  
\newcommand{\xRightarrow}[2][]{\ext@arrow 0359\Rightarrowfill@{#1}{#2}}
\newcommand{\reconnet}{\textsc{ReCon}\textsc{Net}}
\newcommand{\HRPN}{hierarchical reconfigurable Petri net}
\newcommand{\deriv}[1]{\xRightarrow{#1}}
\newcommand{\nach}[1]{\stackrel{#1}{\to}}
\newcommand{\von}[1]{\stackrel{#1}{\gets}}
\newcommand{\Nat}{{\mathbb{N}}}
\newcommand{\M}{\ensuremath{\mathcal{M}} }
\newcommand{\E}{\ensuremath{\mathcal{E}} }
\newcommand{\fire}[1]{\ensuremath{[#1>}}
\newcommand{\R}{\ensuremath{\mathcal{R}} }
\newcommand{\g}{\ensuremath{\mathfrak{g}} }
\newcommand{\pl}{pl}
\newcommand{\tl}{tl}
\newcommand{\mrk}{M}
\newcommand{\categ}[1]{\ensuremath{\mathbf{ #1} } }
\newcommand{\cC}{\categ{C}}
\newcommand{\cSets}{\categ{Sets}}
\newcommand{\lSets}{\categ{lSets}}
\newcommand{\poSets}{\categ{poSets}}
\newcommand{\poSetsg}{\categ{poSetsg}}
\newcommand{\cPT}{\categ{PT}}
\newcommand{\cAHL}{\categ{AHL}}
\newcommand{\cdecoPT}{\categ{decoPT}}
\newcommand{\cPTs}{\categ{PTs}}
\newcommand{\cAHLs}{\categ{AHLs}}
\newcommand{\cdecoPTs}{\categ{decoPTs}}
\newcommand{\A}[2]{A_{#1}^{#2}}
\newcommand{\xyInc}[2]{\ar@{^{(}->}[#1]^{#2} }
\begin{document}
\title{Subtyping for Hierarchical, Reconfigurable Petri Nets}

\author{Julia Padberg\footnote{%
   Hamburg University of Applied Sciences, Hamburg,Germany,
	email{julia.padberg@haw-hamburg.de} }}

\maketitle
\begin{abstract}
Hierarchical Petri nets allow  a more abstract view  and
reconfigurable Petri nets  model dynamic structural adaptation. In this contribution we present   the combination of reconfigurable Petri nets and hierarchical Petri nets yielding   hierarchical structure  for  reconfigurable Petri nets.  Hierarchies are established by substituting transitions by subnets. These subnets are themselves reconfigurable, so they are supplied with their own set of rules. Moreover, global rules that can be applied in all of the net, are provided.
\begin{keywords}
$\M$-adhesive transformation systems, subtyping, Reconfigurable Hierarchical Petri Nets
\end{keywords}

\end{abstract}

\section{Introduction}
\label{s.intro}

Modelling modern systems comes with a lot of different challenges some of which can be eased
by the use of appropriate models. 
A well known technique for system modeling is the usage of Petri nets. Petri nets provide a graphical language for constructing system models. These models can be used for simulations and to analyze the model's properties. This allows locating possible faults in the system at earlier stages which also can decrease overall development costs.The increasing sizes of modern systems result in models becoming rather large and possibly hard to comprehend. The addition of an abstraction layer can counteract this issue. Hierarchical Petri nets use hierarchy to break down the complexity of a large model,
by dividing it into a number of submodels. This helps to concentrate on a specific system part without the need to oversee the whole system. Also submodels can be reused with little afford at multiple location in the same system or even in a different system where similar components are needed.
The analysis and verification of a hierarchical Petri net requires more effort than it's counter part with no hierarchy. 

Advanced systems that need dynamic structural adaptation can be modelled using reconfigurable Petri nets, an approach for dynamic changes in Petri nets. A reconfigurable Petri net  consist of a Petri net and set of rules that modify the Petri net's structure at runtime. They improve the expressiveness of  Petri nets as they increase   flexibility and  change while allowing the transitions to fire.

Reconfigurable Petri nets have been used in many different  application areas, that require both the representation of  their processes and of the system changes within one model. examples are  concurrent systems \cite{LO04}, mobile ad-hoc networks \cite{PHE07}, workflows in dynamic infrastructures \cite{HEP08},
 communication spaces \cite{MGH10,Gab14}, ubiquitous computing \cite{GNH12,Hoffmann10},  flexible manufacturing systems \cite{RPN_FMS_12}, reconfigurable manufacturing systems \cite{kahloul2016designing}).
In \cite{PK17} a comprehensive overview of reconfigurable Petri nets is given, including theoretical foundations and application areas.

\reconnet ~ \cite{reconnet,PEOH12} is a modelling and simulation tool that allows the design of reconfigurable Petri nets. 

\HRPN s combine the hierarchical Petri net type and reconfigurable Petri net type into one, allowing a focused design of submodels and their reusability and the ability for dynamic changes at runtime. 

\komtar{Inhalte }{JP}

\begin{concept}{Hierarchy as a syntactic extension}
The hierarchy based on  transitions being replaced by  subnets is given as a syntactic abbreviation. The hierarchical reconfigurable net is
defined purely by it's flattening into a  reconfigurable net.
\end{concept}

This fundamental  design decision has the following advantages:
First, only the consistency of the flattening construction can be guaranteed using well-known results, but no further semantic correctness needs to be proven. Second, this corresponds directly to the intended implementation of hierarchies in \reconnet. And last,  the transformation systems needs not to be shown for another category of hierarchical nets.

This paper is the detailed version of \cite{LP18} with emphasis on the subtyping and is organized as follows: 
We start by given a very simple introductory example of \HRPN s in Sect.~\ref{s.intEx}.
The subsequent section introduces reconfigurable place/transition nets with labels.  In Sect.~\ref{s.relwork}
 related work, namely approaches to hierarchical Petri nets and hierarchical graph transformations are discussed.
Sect. \ref{s.hier} elaborates the formal definition of \HRPN s and their flattening, the definition of transformation rules  and proves the correctness of the flattening construction.
The integration of reconfigurable hierarchical Petri nets into the  simulation tool for reconfigurable Petri nets \reconnet is discussed in Sect.~\ref{s.hier_reconnet}.   The conclusion in Sect.~\ref{s.conc} completes this paper.

\section{Subtyping of Labels in $\M$-Adhesive Transformation Systems}
\label{s.subtype}

\subsection{Labels and Sub-labels}
\label{ss.labels}


	\begin{definition}{Category of labelled sets $\lSets$}
	Given a partial order $(A,\le, \g)$ with a greatest $a\le \g$ for all $a \in A$ as the label alphabet. 
	
		The category of labelled sets with sub-labelling $lSets$ over
		label alphabet $(A,\le, \g)$ has  $(S,l:S \to A)$ as objects and order-preserving maps
	$f: (S,l) \to (S',l')$  so that $l'\circ f (x) \le l(x)$ for all $x\in S$ as morphisms.
\end{definition}

\begin{lemma}{Adjunction between  \cSets \ and \lSets}\label{l.adjunction}
The left adjoint  functor $F:\cSets \to \lSets$ is given by $F(S_1\nach{f}S_2)= (S_1,l_1) \nach{f} (S_2,l_2) $ 
where $l_i: S_i \to (A,\le)$ so that $l_i(x) = \g$ for $i=1,2$ yields the greatest element of $A$.
The right adjoint functor $G: \lSets \to \cSets$ is defined by $G( S_1,l_1) \nach{g} (S_2,l_2) = S_1 \nach{g} S_2$.

The counit is the natural transformation $\epsilon : F\circ G \to id_\lSets$ with $\epsilon_S=id_S$ an order-preserving map
since for any $s\in S$ we have $l\circ id_S(s) =l(s) \le l_{g}(s)$ .
The  unit is the natural transformation $\eta : id_\cSets \to G\circ F$ with   $\eta_S=id_S$.
\end{lemma}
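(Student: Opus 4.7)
The plan is to verify the claim in three separate steps: first check that $F$ and $G$ are well-defined functors, then check that $\eta$ and $\epsilon$ are natural transformations, and finally check the two triangle identities. Since the object assignments and the components of $\eta, \epsilon$ are all essentially identity maps on underlying sets, the computations should all reduce to trivialities once the sub-labelling condition $l'\circ f(x)\le l(x)$ is confirmed in each relevant place.

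First I would check that $F$ really lands in $\lSets$ and is functorial. Given $f:S_1\to S_2$ in $\cSets$, the image $F(f):(S_1,l_1)\to(S_2,l_2)$ with $l_i\equiv \g$ is a valid $\lSets$-morphism because the required inequality $l_2(f(x))\le l_1(x)$ becomes $\g\le \g$, which holds by reflexivity. Functoriality $F(\mathrm{id}_S)=\mathrm{id}_{F(S)}$ and $F(g\circ f)=F(g)\circ F(f)$ is then immediate since $F$ leaves the underlying map unchanged. The functor $G$ is simply the forgetful functor on the underlying sets, so well-definedness and functoriality are transparent (any $\lSets$-morphism is by definition an underlying set map, so no verification is needed).

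Next I would verify that $\eta$ and $\epsilon$ are natural. For $\epsilon_{(S,l)}=\mathrm{id}_S:(S,\g)\to(S,l)$, the condition $l(\mathrm{id}_S(x))=l(x)\le \g$ shows $\epsilon_{(S,l)}$ is an $\lSets$-morphism (as the authors already remark). Naturality of $\epsilon$ in $(S,l)$ reduces, for any $g:(S_1,l_1)\to(S_2,l_2)$ in $\lSets$, to the square $\mathrm{id}_{S_2}\circ (FG)(g)=g\circ \mathrm{id}_{S_1}$, which collapses to $g=g$ because both $F$ and $G$ act as the identity on the underlying map. Similarly, $\eta_S=\mathrm{id}_S$ is a $\cSets$-morphism $S\to GFS=S$ and naturality is immediate.

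Finally I would check the two triangle identities $\epsilon_{F(S)}\circ F(\eta_S)=\mathrm{id}_{F(S)}$ and $G(\epsilon_{(S,l)})\circ \eta_{G(S,l)}=\mathrm{id}_{G(S,l)}$. Both composites are, on underlying sets, $\mathrm{id}\circ \mathrm{id}=\mathrm{id}$, so the identities hold once well-definedness is in place. The only point requiring any care at all is the step where $\epsilon$ is verified to be an $\lSets$-morphism, which relies crucially on $\g$ being a \emph{greatest} element of $A$; without this assumption the inequality $l(x)\le \g$ would fail and the counit would not exist, so I would highlight this as the unique substantive ingredient of the argument.
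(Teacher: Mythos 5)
Your proposal is correct and follows essentially the same route as the paper: both establish the adjunction via the unit $\eta$ and counit $\epsilon$ given by identities, with the key substantive point being that $\epsilon_{(S,l)}$ is an $\lSets$-morphism because $l(s)\le\g$, and the triangle identities reducing to compositions of identities. You merely spell out the functoriality and naturality checks that the paper leaves implicit, which is a harmless (indeed welcome) elaboration rather than a different argument.
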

\begin{proof}
Let be $M\in Obf(\cSets)$ and $(S,l)\in Obj(\lSets)$:
$$\xymatrix{
              G(S,l) =S  \ar[rr]^{\eta_{G(S,l)} = id_S} \ar[drr]|{id_{G(S,l)}=id_S}
         && GFG(S,l) = S \ar[d]^{G(\epsilon_S)}    \\
         && G(S,l)=S
}
\hfill
\xymatrix{
        F(M) =(M,l_{g}) \ar[rr]^{F(\eta_M) = id_M}  \ar[drr]|{id_{F(M)}=id_M}
	&& FGF(M)=	(M,l_{g})  \ar[d]^{\epsilon_{F(M)} = id_{F(M)}}\\
	&& F(M)=	(M,l_{g})
}
$$

Obviously, the composition of identities leads to the corresponding identity with  
$\epsilon_{F(M)} \circ F(\eta_M) =  id_{F(M)} \circ F(id_M) = id_{F(M)}$ and $G(\epsilon_{(S,l)}) \circ \eta_{G(S,l)}= id_{G(S,l)} \circ  id_{G(S,l)}  =id_{G(S,l)}$.
\end{proof}
So, we know that $F$ preserves  colimits ans $G$ preserves limits. Next we show   initial objects and pushouts of $\M$-morphisms. 

\begin{definition}{Class $\M$}
\label{d.M}
The class $\M$ is given by the class of strict order preserving, injective mappings, i.e
$f:(S,l) \to (S',l)$ so that $f$ is injective and  $l = l'\circ f$.
\end{definition}
\begin{lemma}{Initial Object and $\M$-Pushouts in \lSets}
\label{l.POlSets}
The initial object is $(\emptyset,\emptyset)$.\\ 
Given the span $(S_1,l_1) \von{f} (S_0,l_0) \nach{g} (S_2,l_2)$ with $f\in \M$, then there exists  the pushout
	     $(S_1,l_1) \nach{g'} (S_3,l_3) \von{f'} (S_2,l_2)$. Moreover \M \ is stable under pushouts.
\end{lemma}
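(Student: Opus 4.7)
The plan is to verify the initial object claim directly and then lift the underlying Set pushout to $\lSets$ by equipping it with a label function defined piecewise, using the fact that $f \in \M$ means $f$ is injective and strictly label-preserving, i.e.\ $l_0 = l_1 \circ f$. The initial object $(\emptyset, \emptyset)$ is immediate: the unique empty map into any $(S, l)$ satisfies the morphism inequality vacuously.

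For the pushout, since $f$ is injective the pushout in $\cSets$ admits the standard presentation $S_3 = S_2 \sqcup (S_1 \setminus f(S_0))$, with $f' : S_2 \hookrightarrow S_3$ the inclusion and $g' : S_1 \to S_3$ defined by $g'(f(x_0)) = g(x_0)$ on $f(S_0)$ and $g'(x_1) = x_1$ otherwise. I then put $l_3 := l_2$ on $S_2$ and $l_3 := l_1$ on $S_1 \setminus f(S_0)$. This construction forces $l_3 \circ f' = l_2$, so $f' \in \M$ because it is also injective, and $\M$-stability is settled. To see $g'$ is a morphism in $\lSets$, note that on $x_1 \notin f(S_0)$ one has $l_3(g'(x_1)) = l_1(x_1)$, and on $x_1 = f(x_0)$ the required inequality $l_2(g(x_0)) \le l_1(f(x_0))$ is obtained by chaining the strictness $l_1(f(x_0)) = l_0(x_0)$ (from $f \in \M$) with the morphism inequality $l_2(g(x_0)) \le l_0(x_0)$ (from $g$ being a morphism).

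For the universal property, given a competing cocone $(S_1, l_1) \nach{u} (T, l_T) \von{v} (S_2, l_2)$ with $u \circ f = v \circ g$, the set-theoretic mediator $h : S_3 \to T$ equals $v$ on $S_2$ and $u$ on $S_1 \setminus f(S_0)$, and the morphism condition $l_T \circ h \le l_3$ is immediate case-by-case from $u$ and $v$ being morphisms. The main conceptual obstacle is pinpointing why the $\M$-condition on $f$ is essential: setting $l_3 \circ f' = l_2$ (required so that $f' \in \M$) forces the inequality $l_2(g(x_0)) \le l_1(f(x_0))$ on the overlap, and it is precisely the strictness $l_1 \circ f = l_0$ that delivers this compatibility for free. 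Without it, reconciling the constraints coming from $g'$ and $f'$ would demand meets in $(A, \le, \g)$ that the partial order is not assumed to provide, which is why the lemma restricts pushouts to spans with one leg in $\M$.
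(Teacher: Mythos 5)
Your proposal is correct and follows essentially the same route as the paper: take the pushout of the underlying sets, equip $S_3$ with the piecewise label ($l_2$ on the image of $f'$, $l_1$ on the complement), use the strictness $l_0 = l_1 \circ f$ of the $\M$-morphism to verify that $g'$ satisfies the label inequality on the overlap, and observe that the induced mediator inherits the morphism condition case by case, giving $f' \in \M$ and hence stability. Your explicit presentation $S_3 = S_2 \sqcup (S_1 \setminus f(S_0))$ and the closing remark on why one leg must be in $\M$ are just cleaner packaging of the same argument.
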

\begin{proof}
\begin{enumerate}
	\item The initial object is $(\emptyset,\emptyset)$ as there is the empty 
	mapping to each labelled  set in \lSets \ and it is order-preserving.
		\item For $(S_1,l_1) \von{f} (S_0,l_0) \nach{g} (S_2,l_2)$ with $f\in \M$
	there is in $\cSets$ the span
	       $S_1\von{f} S_0 \nach{g} S_2$ and its pushout $S_1 \nach{\overline{g}} \overline{S_3} \von{\overline{f}} S_2$, see pushout $(PO1)$ in Fig.~ \ref{f.lSetPO}				
				with 
				\begin{equation*}
	         \label{eq.l3}
				  l_3(s) =\begin{cases}
					             l_1(s_1) & \text{ if } s=g'(s_1) \text{  and } s \notin f'(S_2)\\
											 l_2(s_2) & \text{ if } s=f'(s_2) 
					             \end{cases}
         \end{equation*}
				The morphism $f':(S_2, l_2) \to S_3,l_3)$ is obviously well defined  and $f' \in \M$ and  $g':(S_1, l_1) \to (S_3,l_3)$ is well defined
				since :\\
				for $s_1\notin  f(S_0)$  we have 	$l_1(s_1)= l_3(g'(s_1))$ and \\
				for $s_1=f(s_0)$  we have \\
				$l_1(s_1) = l_1\circ f(s_0) = l_0(s_0) \ge l_2 \circ g (s_0)  = l_3 \circ f' \circ g (s_0) = l_3 \circ g' \circ f (s_0) = l_3\circ g'(s_1)$\\
    $\M$ is stable under pushouts, since  $f \in \M$.
			 \begin{figure}[H]
					$\xymatrix{
					     (PO1) \text{ in } \cSets: \\
									S_0 \ar[rr]^{f \in \M} \ar[dd]_{g}  \ar[dr]|{l_0}
						 && S_1  \ar[dd]^{g'}    \ar[dl]|{l_1}  \ar@/^4mm/[dddrr]^{g''} \\
						   & (A,\le) \\
							     S_2  \ar[rr]_{f'}  \ar[ur]|{l_2}  \ar@/_4mm/[drrrr]^{f''}
						&&  S_3 \ar[ul]|{l_3} \ar[drr]^h\\
				&&&&  S
					}
					$\hfill
					$\xymatrix{
					      \lSets: \\
									(S_0,l_0)  \ar[rr]^{f \in \M} \ar[dd]_{g}  \ar@{}[ddrr]|{(PO2)}
						 && (S_1,l_1)  \ar[dd]^{g'} \ar@/^4mm/[dddrr]^{g''}   \\~\\
							     (S_2,l_2)  \ar[rr]_{f'}  \ar@/_4mm/[drrrr]^{f''}
						&&  (S_3,l_3)  \ar[drr]^h\\
				&&&&  (S,l)
						}
						$
				 \caption{
	         \label{f.lSetPO} Pushout Construction in $\lSets$}
			 \end{figure}
			
			$(PO2)$ commutes in $\lSets$ and given a labelled set $(S,l)$, so that $g'' \circ f = f'' \circ g$, then there is in $\cSets$ the 
			unique induced morphism $h: S_3\to S$ so that $h\circ g' = g''$ and $h\circ f'= f''$.\\
			$h$ is well defined in $\lSets$ as well, since :\\
			for $s_3= g'(s_1)$ and $s_3 \notin f'(S_2)$ we have\\
			$l_3(s_3) = l_3\circ g'(s_1) = l_1 (s_1) \ge l \circ g''(s_1) = l \circ h \circ g'(s_1) = l \circ h(s_3)$
			for $s_3= f'(s_2)$  we have\\
			$l_3(s_3) = l_3\circ f'(s_2) = l_2 (s_2) \ge l \circ f''(s_2) = l \circ h \circ f'(s_2) = l \circ h(s_3)$\\
			So, $(2)$ is pushout in $\lSets$.
	
			\end{enumerate}
\end{proof}

\begin{lemma}{$\lSets$ has pullbacks}
\label{l.PBlSets}
Given the co-span $(S_1,l_1) \nach{g} (S_0,l_0) \von{f} (S_2,l_2)$, then there exists  the pullback 
$(S_1,l_1) \von{f'} (S_3,l_3) \nach{g'} (S_2,l_2)$ in the category $\lSets$.  Moreover \M \ is stable under pullbacks.
\end{lemma}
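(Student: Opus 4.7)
The plan is to mirror the structure of the pushout proof in Lemma~\ref{l.POlSets}: first build the pullback in $\cSets$, then equip the underlying set with a label function that makes the two projections morphisms in $\lSets$, and finally verify the universal property and stability of $\M$.

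First, I would take the standard pullback of $S_1 \nach{g} S_0 \von{f} S_2$ in $\cSets$, namely $S_3 = \{(s_1,s_2) \in S_1 \times S_2 \mid g(s_1)=f(s_2)\}$ with projections $f':S_3 \to S_1$ and $g':S_3 \to S_2$. I would then define the label function $l_3: S_3 \to A$ by
\[
 l_3(s_1,s_2) \;=\; l_1(s_1) \vee l_2(s_2),
\]
i.e.\ the supremum of the two component labels in $(A,\le)$, using $\g$ as a fallback when a proper join is not available (so the definition is always well formed since $\g$ is a common upper bound). Both projections are then morphisms in $\lSets$: $l_1 \circ f'(s_1,s_2)=l_1(s_1) \le l_3(s_1,s_2)$ and symmetrically for $g'$.

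For the universal property, suppose we have $h_1:(S,l) \to (S_1,l_1)$ and $h_2:(S,l) \to (S_2,l_2)$ with $g \circ h_1 = f \circ h_2$. The underlying pullback in $\cSets$ yields the unique mediator $h: S \to S_3$ with $h(x)=(h_1(x),h_2(x))$. To check that $h$ is a morphism in $\lSets$, observe that $l_1(h_1(x)) \le l(x)$ and $l_2(h_2(x)) \le l(x)$ by assumption, hence $l(x)$ is an upper bound of $\{l_1(h_1(x)), l_2(h_2(x))\}$, so the supremum $l_3(h(x))$ satisfies $l_3(h(x)) \le l(x)$; uniqueness of $h$ is inherited from $\cSets$.

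For $\M$-stability, assume $f \in \M$, so $f$ is injective and $l_0 = l_2 \circ f$. Injectivity of $f'$ follows from injectivity of $f$ in the standard way. For strictness, pick $(s_1,s_2) \in S_3$: from $g(s_1)=f(s_2)$ and the morphism conditions one gets $l_2(s_2)= l_0(f(s_2))= l_0(g(s_1)) \le l_1(s_1)$, so $l_1(s_1)$ is already the supremum, giving $l_3(s_1,s_2)=l_1(s_1)=l_1 \circ f'(s_1,s_2)$, which places $f'$ in $\M$. The main obstacle I anticipate is the definition of $l_3$: if $(A,\le)$ is not a join-semilattice, the pullback cannot in general inherit labels pointwise, so one has to fall back on the greatest element $\g$, and then one must check carefully that this fallback never obstructs the universal property, which it does not because the above argument only requires an upper bound in $(A,\le)$, and one such bound is supplied by any competing cone.
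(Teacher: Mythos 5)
Your construction is essentially the paper's: take the pullback of the underlying sets, label a pair $(s_1,s_2)$ by the ``join'' of $l_1(s_1)$ and $l_2(s_2)$ (the paper writes $\max$), check that the two projections and the induced mediator are order-preserving, and prove $\M$-stability by observing that $f\in\M$ forces $l_2(s_2)=l_0(f(s_2))=l_0(g(s_1))\le l_1(s_1)$, so that the pullback label equals $l_1\circ f'$ and $f'$ is strict and injective. That part of your argument is correct and matches the paper (modulo the slip ``$l_0=l_2\circ f$'', which should read $l_2=l_0\circ f$; your subsequent computation uses the right identity).

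The genuine problem is the fallback to $\g$ when the join of $l_1(s_1)$ and $l_2(s_2)$ does not exist. With that definition the projections are still morphisms, but the universal property fails: the mediator $h$ must satisfy $l_3(h(x))\le l(x)$ for every competing cone $(S,l)$, and a one-point cone over $(s_1,s_2)$ may carry as its label \emph{any} common upper bound of $l_1(s_1)$ and $l_2(s_2)$. If those two labels have incomparable common upper bounds $c,d$ strictly below $\g$ and no least one, your label is $\g$, and $\g\le c$ is false, so $h$ is not order-preserving and $(S_3,l_3)$ is not a pullback; your closing claim that ``any upper bound suffices'' is exactly where this goes wrong, since $l_3$ must lie \emph{below} every cone label, i.e.\ it must be a least upper bound. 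Moreover no other choice of $l_3$ can repair this: since $G$ of Lemma~\ref{l.adjunction} is a right adjoint, the underlying set of any pullback in $\lSets$ is the set-theoretic pullback, and the one-point-cone argument then forces $l_3(s_1,s_2)$ to be a least upper bound, so when the join is missing the pullback simply does not exist. Your instinct that the paper's $\max$ is delicate is sound — the statement in full generality needs the relevant joins (e.g.\ a total order or a lattice as label alphabet) — but the correct reaction is to add that hypothesis, or to restrict to the case the $\M$-adhesive framework actually needs, namely pullbacks along $\M$-morphisms, where your own stability computation shows $l_2\circ g'\le l_1\circ f'$, so the required maximum exists for free.
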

\begin{proof}
Given the co-span  $(S_1,l_1) \nach{g} (S_0,l_0) \von{f} (S_2,l_2)$, then there is the pullback $(PB1)$ in $\cSets$ with $S_3 \cong \{ (s_1,s_2) \mid g(s_1) = f(s_2)\}$, the projections $g'$ and $f'$, and  the induced morphisms $h$, so that 
$g' \circ h = g''$ and $f' \circ h= f''$.

			 \begin{figure}[H]
					$\xymatrix{
					      \cSets: \\
								S \ar@/^4mm/[drrrr]^{f''} \ar@/_4mm/[dddrr]^{g''}  \ar[drr]^h\\
						 && S_3  \ar[rr]^{f'} \ar[dd]_{g'}  \ar@{}[ddrr]|{(PB1)}
						 && S_1  \ar[dd]^{g}   \\~\\
						 && S_2  \ar[rr]_{f}  
						&&  S_0 
						}
						$
						$\xymatrix{
					      \lSets: \\
								  (S,l) \ar@/^4mm/[drrrr]^{f''} \ar@/_4mm/[dddrr]^{g''}  \ar[drr]^h \\
						 &&  (S_3,l_3)    \ar[rr]^{f'} \ar[dd]_{g'}  \ar@{}[ddrr]|{(PB2)}
						 && (S_1,l_1)  \ar[dd]^{g} \\~\\
						 && (S_2,l_2)  \ar[rr]_{f} 
						&& (S_0,l_0)
						}
						$
				 \caption{
	         \label{f.lSetPB} Pullback Construction in $\lSets$}
			 \end{figure}
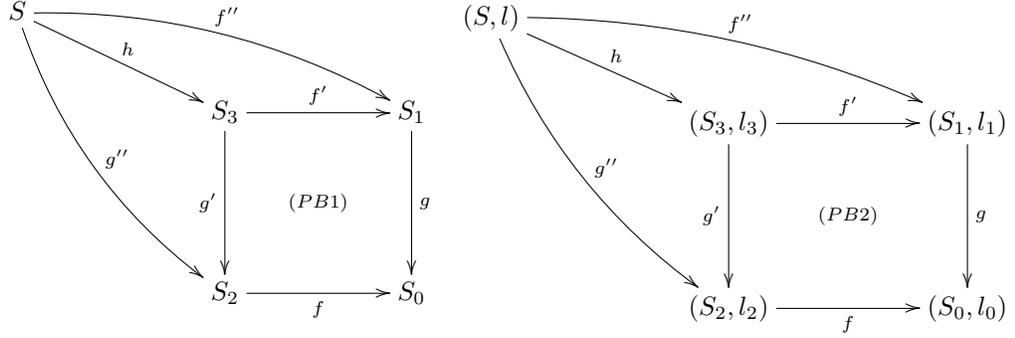
						
In $\lSets$ we define $l_3: S_3 \to (A,\le)$ with $l_3(s) = max\{ l_1\circ f' (s),l_2\circ g' (s)\}$. Obviously, $g'$ and $f'$ are then order-preserving:\\
$l_3(s)  = max\{ l_1\circ f'(s), \, l_2\circ g' (s)\} \ge l_1\circ f'(s) $;  the same for $g'$.\\
Moreover, $h:S\to S_3$ is also order-preserving: \\
We have $l(s) \ge l_1\circ f''(s)$ and $l(s) \ge l_2\circ g''(s)$,  so 
\begin{align*}
                        l(s) &  \ge   max\{ l_1\circ f''(s),\, l_2\circ g''(s) \}\\
                              & = max\{ l_1\circ f' \circ h(s),  \, l_2\circ g'  \circ h(s)\} \\
															& =l_3\circ h (s)
\end{align*}

Moreover \M \ is stable under pullbacks.\\
 $f\in \M$ implies $ l_2 = l_0 \circ f$, so we have for an arbitrary $s_3 \in S_3$  \\
$l_1\circ f'(s_3) \ge l_0\circ g\ circ f'(s_3)=  l_0\circ f \circ g'(s_3) = l_2\circ g'(s_3)$.\\
Hence, $l_3(s_3) =  max\{ l_1\circ f' (s),l_2\circ g' (s)\} =  l_1\circ f' (s)$ and $f' \in \M$.
\end{proof}

\begin{theorem}{$\lSets$  is an $\M$-Adhesive Category}
  \label{th.madTS.lSets}
\end{theorem}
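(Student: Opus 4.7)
{}
The plan is to complete the verification of $\M$-adhesivity on top of Lemmas~\ref{l.POlSets} and~\ref{l.PBlSets}, which already supply the initial object, pushouts along $\M$-morphisms, all pullbacks, and the stability of $\M$ under both. What remains is the closure of $\M$ under identities, composition and decomposition, and the van Kampen property for $\M$-pushouts.

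The closure conditions on $\M$ are routine. Identities are strict, label-preserving and injective; the composition of two such morphisms is again strict, label-preserving and injective; and if $g \circ f \in \M$ with $g \in \M$, then substituting $l_2 = l_3 \circ g$ into $l_1 = l_3 \circ g \circ f$ yields $l_1 = l_2 \circ f$, while injectivity of the composite forces injectivity of $f$.

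For the van Kampen property I would lift the corresponding property from $\cSets$, which is well known to be adhesive. Given a commutative cube in $\lSets$ whose bottom face is an $\M$-pushout and whose back faces are pullbacks, applying the right adjoint $G:\lSets \to \cSets$ of Lemma~\ref{l.adjunction} produces a cube in $\cSets$ in which the back faces remain pullbacks (since $G$ preserves limits), and the bottom face is again a pushout (since the construction of Lemma~\ref{l.POlSets} is built on top of the underlying set-level pushout). Van Kampen in $\cSets$ then yields the equivalence ``top is a pushout iff front faces are pullbacks'' at the level of underlying sets.

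The main obstacle is transporting this equivalence back to $\lSets$, since neither pullbacks nor pushouts are in general reflected by $G$; the labels have to be chased by hand. In one direction, starting from pullbacks on the front faces in $\lSets$, the top face is already a set-level pushout by the previous step, and $\M$-stability under pullback (Lemma~\ref{l.PBlSets}) provides an $\M$-leg on the top so that the case-defined label of Lemma~\ref{l.POlSets} is available; commutativity of the cube together with the front pullback equations then forces this canonical label to coincide with the one carried by the top object. In the converse direction, starting from a pushout on top in $\lSets$, the $\max$-formula of Lemma~\ref{l.PBlSets} produces candidate pullback labels on the front faces which must agree with the existing labels by comparing along the top $\M$-pushout, so the front faces are indeed pullbacks in $\lSets$. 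Combined with the closure conditions on $\M$ and the earlier lemmas, this establishes the van Kampen property and completes the proof that $\lSets$ is $\M$-adhesive.
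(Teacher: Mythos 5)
Your proof is correct and follows essentially the same route as the paper: PO-PB compatibility from Lemmas~\ref{l.POlSets} and~\ref{l.PBlSets} plus the trivial closure properties of $\M$, and the van Kampen property obtained by passing to underlying sets, invoking adhesivity of $\cSets$, and using stability of $\M$ under pullbacks to obtain the $\M$-leg on the top square. The only difference is that you make explicit the label-chasing needed to reflect the set-level pushout and pullbacks back into $\lSets$, a step the paper covers with the brief remark that the constructions in $\cSets$ and $\lSets$ coincide.
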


\begin{proof}
	\begin{enumerate}
		\item The class $\M$ in $\lSets$  is  PO-PB compatible, since
		\begin{itemize}
			\item pushouts along \M-morphisms exist and \M \ is stable under pushouts, see Lemma~\ref{l.POlSets}
		  \item pullbacks along \M-morphisms exist and \M \ is stable under pullbacks , see Lemma~\ref{l.PBlSets} 
		  \item and obviously, \M \ contains all identities and is closed under composition.
	\end{itemize}
	\item  In $\lSets$ pushouts along \M-morphisms are $\M$-VK squares:\\ 
	 In $\lSets$ let be given a  pushout as (\ref{eq.madhPO})  in Def.~\ref{d.VK} with $m \in \M$  and some commutative cube as (\ref{eq.madhVK})  in Def.~\ref{d.VK} with (\ref{eq.madhPO})  being 
the bottom square and the back faces being pullbacks, then we have:
 
 \begin{description}
	 \item[$\Rightarrow$:] Let the top of (\ref{eq.madhVK})  in Def.~\ref{d.VK} be a pushout in $\lSets$. Pullbacks preserve  \M-morphisms, so  $m'\in \M$ and  
	      hence the top square is a pushout in $\cSets$ as well. As the category $\cSets$ is $\M$-adhesive, the front faces are pullbacks in $\cSets$ as well.
				Since the construction of pullbacks coincides in $\cSets$ and $\lSets$, the front faces are pullbacks in $\lSets$.
	
   \item[$\Leftarrow$:] Let the front faces be pullbacks in $\lSets$, and hence pullbacks in $\cSets$. Since $m \in \M$ (\ref{eq.madhPO})  in Def.~\ref{d.VK}
	       is pushout in $\cSets$ as well. So, $\cSets$ being adhesive, we have the top square being a pushout 
				in $\cSets$. Moreover, $m'\in \M$ as the back face is a pullback  preserving \M-morphisms. So, the top
				is a pushout along $\M$ is $\lSets$.
 \end{description}

\end{enumerate}
    Hence,  $(\lSets,\M)$ is an $\M$-adhesive category.
\end{proof}
Next we use Thm.~\ref{th.madTS.lSets} to prove that  place/transition nets with  label subtyping yield an $\M$-adhesive category.

\begin{definition}{Category of  place/transition nets with  subtyping of labels $\cPTs$}
   The category of  place/transition nets with  subtyping of labels $\cPTs$  is given by PT nets   $N = (P, T, pre, post, \pl, \tl, \mrk)$ over the alphabet $A=((A_P,\le_P),(A_T,\le_T))$ where 
	  $(P,\pl)$ is a labelled set over $ (A_P,\le_P)$ and $(T,\tl)$ is a  labelled set over $ (A_T,\le_T)$.
		net morphisms $f=(f_P,f_T): N_1 \to N_2$ where $f_P$ and $f_T$ are  order-preserving mappings.
\end{definition}

\begin{theorem}{$(\cPTs,\M)$ is an $\M$-adhesive category}
  \label{th.madTS.PTs}
  \end{theorem}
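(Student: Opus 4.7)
The plan is to reduce the claim to the already-established $\M$-adhesivity of $\lSets$ (Thm.~\ref{th.madTS.lSets}), following the same componentwise scheme that is standard for the unlabelled category $\cPT$, with $\lSets$ now playing the role of $\cSets$ on both the place component and the transition component. I would first fix the class $\M$ in $\cPTs$ to be those net morphisms $f=(f_P,f_T)$ with both $f_P$ and $f_T$ in $\M$ as defined in Def.~\ref{d.M} (i.e., strict order-preserving injections on places and on transitions). This reduces all constructions on $\M$-morphisms of $\cPTs$ to parallel constructions on $\lSets$.

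Next I would construct pushouts along $\M$ and pullbacks in $\cPTs$ componentwise: take the pushout, resp. pullback, of the place part in $\lSets$ using Lemma~\ref{l.POlSets}, resp. Lemma~\ref{l.PBlSets}, do the same for the transition part, and then define the $pre$ and $post$ maps of the resulting net via the universal properties. Concretely, for a pushout along $m=(m_P,m_T)\in\M$, the new $pre$ and $post$ maps are induced from the transition pushout into the free commutative monoid over the place pushout, using that the free commutative monoid functor $(-)^\oplus$ preserves pushouts (this is the same step as in the unlabelled case, and is compatible with labels because $F:\cSets\to\lSets$ is a left adjoint by Lemma~\ref{l.adjunction} and hence preserves colimits). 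Stability of $\M$ under pushouts and pullbacks in $\cPTs$ then follows immediately from componentwise stability in $\lSets$.

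For the VK property, I would take any commutative cube in $\cPTs$ whose bottom face is a pushout along an $\M$-morphism and whose back faces are pullbacks, and project it onto a cube in $\lSets$ for places and onto a cube in $\lSets$ for transitions. Both projected cubes satisfy the componentwise hypotheses, so by the VK property in $\lSets$ established in the proof of Thm.~\ref{th.madTS.lSets} we obtain the two implications $\Rightarrow$ and $\Leftarrow$ on each component. Since pre/post are uniquely determined by the universal properties of the place and transition pushouts/pullbacks, the componentwise VK property lifts to $\cPTs$. Together with the routine verifications that $\M$ contains identities and is closed under composition, this will give $\M$-adhesivity.

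The main obstacle I expect is the $pre$/$post$ step, specifically checking that the componentwise construction yields \emph{well-defined} net morphisms, i.e. that the pre/post square commutes after taking pushouts and that the induced map is label-coherent when the place component is labelled. This reduces to showing that the adjunction of Lemma~\ref{l.adjunction} behaves well with $(-)^\oplus$, or equivalently that the multiset structure on places is orthogonal to the label order, which lets the standard $\cPT$ argument be transported verbatim into the labelled setting.
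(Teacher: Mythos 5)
Your overall strategy --- reducing $\M$-adhesivity of $\cPTs$ to that of $\lSets$ by treating places and transitions componentwise and inducing $pre$/$post$ afterwards --- is essentially a hand-unrolled version of what the paper does in one line: the paper exhibits $\cPTs$ as the comma category $ComCat(G,(\_)^\oplus;I)$ with $I=\{1,2\}$ and invokes the general construction theorem for weak adhesive HLR categories (Thm.~4.15 in \cite{FAGT}), whose hypotheses are exactly Thm.~\ref{th.madTS.lSets}, the fact that $G:\lSets\to\cSets$ preserves pushouts along $\M$-morphisms, and the fact that $(\_)^\oplus$ preserves pullbacks along injective morphisms. So the route is legitimate in principle, but the step you yourself flag as the main obstacle is where your justification goes wrong. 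The free commutative monoid functor $(\_)^\oplus:\cSets\to\cSets$ does \emph{not} preserve pushouts: for the pushout of $\{\ast\}\leftarrow\emptyset\rightarrow\{\ast\}$ the comparison map from the pushout of $\Nat\leftarrow\{0\}\rightarrow\Nat$ into $(\{a,b\})^\oplus\cong\Nat\times\Nat$ misses $(1,1)$, and the legs here are even injective. Fortunately that property is not what the pushout step needs: to induce $pre_3,post_3:T_3\to P_3^\oplus$ you only need that the underlying set of the $\lSets$-pushout of transitions is the $\cSets$-pushout (which holds by the explicit construction in Lemma~\ref{l.POlSets}, i.e.\ $G$ preserves pushouts along $\M$; it does not follow from $F$ being a left adjoint, which concerns the other direction of the adjunction) together with plain functoriality of $(\_)^\oplus$.

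The preservation fact that \emph{is} indispensable, and which your proposal omits, is that $(\_)^\oplus$ preserves pullbacks along injective morphisms. Without it you cannot induce $pre$/$post$ on the componentwise pullback (you need $P_3^\oplus$ to be the pullback of $P_1^\oplus\to P_0^\oplus\leftarrow P_2^\oplus$), you cannot show that pullbacks along $\M$-morphisms in $\cPTs$ are computed componentwise, and consequently the claim that ``the componentwise VK property lifts to $\cPTs$'' is not automatic: in the VK cube the front faces involve cospans whose legs need not be in $\M$, and characterizing them as componentwise pullbacks is precisely the delicate point --- it is the reason PT-style net categories arise via the \emph{weak} adhesive HLR construction rather than by a naive product argument. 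The closing appeal to ``the multiset structure being orthogonal to the label order'' does not substitute for this; the missing ingredient is exactly the pullback-preservation property of $(\_)^\oplus$ (along injections) that the paper's cited comma-category theorem takes as a hypothesis. With that fact stated and used in place of the false pushout-preservation claim, your componentwise argument can be completed, but as written the central step fails.
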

	\begin{proof}
The proof applies the construction for weak adhesive HLR categories (see Thm.~4.15 in \cite{FAGT}):
	 We know that  $(\lSets,\M)$  with $\M$ being the strict order preserving, injective mappings is an $\M$-adhesive
category and that $ (\_)^\oplus: \cSets \to \cSets$ preserves pullbacks along injective morphisms.
 As shown above $(\lSets,\M)$  with $\M$ being the strict order-preserving mappings is an $\M$-adhesive  
 category and  $G: \lSets \to \cSets$ preserves pushouts along $\M$-morphisms.
So, the category $\cPTs$ is isomorphic to the comma category $ComCat(G,(\_)^\oplus;I)$ with I = {1,2}, where $G: \lSets \to \cSets$ is the right adjoint  (see Lemma~\ref{l.adjunction}) from partial ordered sets to sets and $(\_)^\oplus$ is the free commutative monoid functor and hence  an $\M$-adhesive category.
	\end{proof}

	\begin{remark}
	\label{r.morePNcats}
	Further categories of Petri nets with subtyping of labels can be obtained using the constructions of previous papers  by replacing the category $\cSets$ by $\lSets$:
	\begin{enumerate}
		\item \label{morePNcats.i}  In \cite{Pad12} decorated place/transition nets yield an $\M$-adhesive  transformation category  $\cdecoPT$ for $\M$ being the corresponding class of strict morphisms, replacing  $\cSets$ by $\lSets$ we obtain   $\cdecoPTs$ for $\M$ being based on strict order preserving, injective mappings.		
		 \item Algebraic high-level nets  have been shown in  \cite{Prange08} to be an $\M$-adhesive  category $\cAHL$ for $\M$ being the class of strict morphisms. Replacing  $\cSets$ by $\lSets$ we obtain   $\cAHLs$ for $\M$ being based on strict order preserving, injective mappings. 
		\item Decorated place/transition nets with inhibitor arcs and  algebraic high-level nets with inhibitor arcs  also yield $\M$-adhesive categories (see \cite{Padberg14}) we can extend them with subtyping of labels as well.
		\item We can combine subtyping of labels even with transition priorities (see \cite{Padberg15}).
		A category of labelled partial orders, where the partial order is independent of the order of the labeling, is the basis and can be proven to be $\M$-adhesive.		
	\end{enumerate}
	\end{remark}

\subsection{Construction of the Name Space for Hierarchical Reconfigurable Petri Nets}\label{ss.namespace}
In \cite{Padberg15} the category of partial ordered sets $\poSets$ where the objects are partially orders sets and the morphisms are
order-preserving maps, that are maps $f : A \to B$ preserving the order, so $a\le a'$ implies $f(a) \le f(a')$. Here, we use for the name space
partial ordered sets with a greatest element $\g$ and the additional condition that $f(\g)=\g$.

	\begin{definition}{Category of partial ordered sets with a greatest element $\poSetsg$}
	The objects $(A,\le_A,\g)$ are partially orders sets with a greatest element $\g$ and the morphisms are
order-preserving maps $f : A \to B$ so that  $a\le a'$ implies $f(a) \le f(a')$ and $f(\g)=\g$.
\end{definition}
This category has obviously initial and final object $(\{\g\},\le,\g)$ and coproducts.
	The construction of pushouts is the same in  $\poSets$ (in \cite{Padberg15}), and pushouts of strict order
embeddings are pushouts in $\cSets$ as well.

For the construction of the name space for local and global rules we need an additional construction. it is an interesting question whether this corresponds to some standard (categorical) construction.
\begin{definition}{Name space $ (\mathfrak{A},\le,\g)$}
Given subsets $(A_i,\le_i,\g)$ for $i \in I$ of the the global name space $(A,\le,\g)$, then we have the coproduct $(C,\le_C,\g)  = \coprod_{i\in I} (A_i,\le_i,\g) $.
$$\xymatrix{
      &&  (A_1,\le_1,\g)   \ar@{^{(}->}[dddrr]|{inc_1}  \ar[dddll]|{c_1}\\
      &&  (A_2,\le_2,\g)   \ar@{^{(}->}[ddrr]|{inc_2}  \ar[ddll]|{c_2}\\
      &&  \vdots  \\
           (C,\le_C,\g) = \coprod_{i\in I} (A_i,\le_i,\g)   \ar[drr]|{c_C}
			&&  (A_i,\le_i,\g) \ar@{^{(}->}[rr]|{inc_i}  \ar[ll]|{c_i}
			&& (A,\le_A,\g)  \ar[dll]|{c_A}\\
			&& (\mathfrak{A},\le,\g)
}
$$
$(\mathfrak{A},\le,\g)$ is given as $\mathfrak{A} = C \coprod A$  in $\poSetsg$ and 
\begin{align*}
 \le \;\; & =   \{ (c_C \circ c_i (x), c_C \circ c_i (y)) \mid x \le_i y \text { for } i \in I\}   &  \text{ elements  of } \le_C \\
             &  \cup  \{ (c_A (x), c_A (y)) \mid x \le_A y \}  & \text{ elements of  } \le_A  \\
		         & \cup \{(c_C \circ c_i (x_i),c_A(x)) \mid   inc_i(x_i) = x  \text{ with } x_i \in A_i \text { for } i \in I\} &\text{  gobal  names greater than local ones}
\end{align*}

\end{definition}

\begin{example}{Name space construction}
In this example we have the global name space $A=\{a,b,c,d,e,f,g,\mathfrak{z}\}$ with the partial order give as a Hasse diagram and the greatest element $\mathfrak{z}$.
The subsets $A_i \subset A$ for $i=1,2,3$  denote the local name spaces. The coproduct  $(C,\le_C,\mathfrak{z}) = \coprod_{i\in I} (A_i,\le_i,\g)$ duplicates all elements except the greatest $\mathfrak{z}$, indicated by the indices. 

$\mathfrak{A}$ is then again the coproduct of $C$ and $A$, keeping the global names distinct from the local ones.
 Moreover,
$\le$ is the corresponding  union of the relations $\le_i$ and$\le_A$ with the additional relations that eaxg global name $x\in A$ is greater that the corresponding local ones
$x\ge x_i$:
\begin{figure}[H]

	\includegraphics[width=1.00\textwidth]{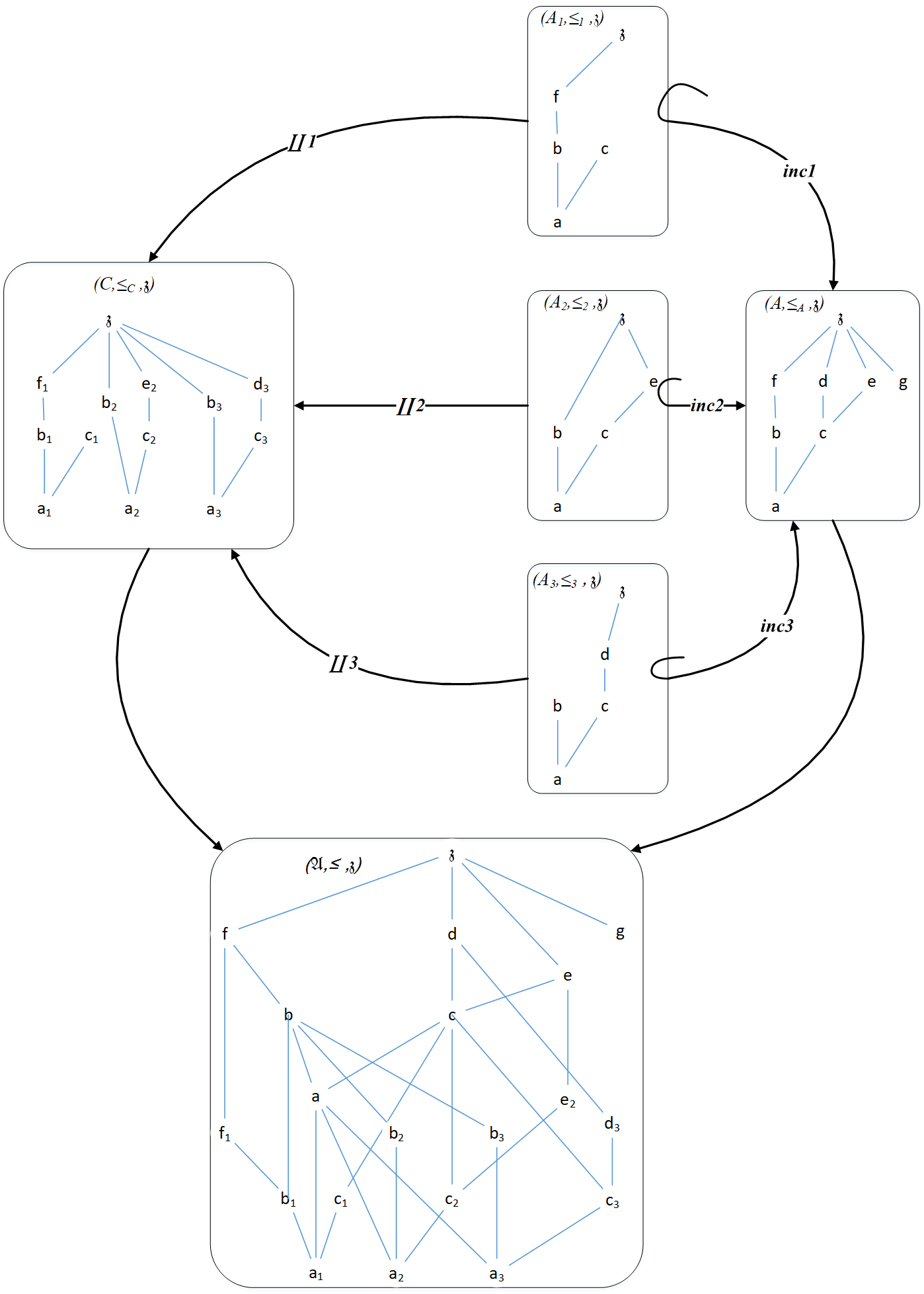}
	\caption{Example for name space construction}
	\label{fig:namespaceGlobal}
\end{figure}

\end{example}

	\begin{corollary}{Results} 
	These results  hold:
	\begin{itemize}
		\item Local Church Rosser Theorem for pairwise analysis of sequential and parallel independence
               \\ \hfill see Thm. 5.12 in \cite{FAGT}
\item Parallelism Theorem for applying independent rules and transformations in parallel 
               \\ \hfill see Thm. 5.18 in \cite{FAGT}
\item Concurrency Theorem for applying E-related dependent rules simultaneously 
               \\ \hfill see Thm. 5.23 in \cite{FAGT}
\item Embedding and Extension Theorem for transferring transformations and analysis results to more complex scenarios
               \\ \hfill see Thms. 6.14 and 6.16 in \cite{FAGT}
\item Local Confluence Theorem and Completeness of critical pairs for analyzing conflicts and for showing local Confluence 
               \\ \hfill see Thm. 6.28 and Lemma 6.22 in \cite{FAGT}

\end{itemize}

\end{corollary}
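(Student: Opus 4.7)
The plan is to deduce each bulleted item by invoking the corresponding theorem from FAGT, using Theorem~\ref{th.madTS.PTs} as the central hypothesis. Since $(\cPTs,\M)$ has been established as $\M$-adhesive, the abstract double-pushout transformation machinery developed in Chapters~5 and~6 of \cite{FAGT} applies, and the corollary is essentially a harvest of those general results. Because the same comma-category construction underlies the categories $\cdecoPTs$ and $\cAHLs$ mentioned in Remark~\ref{r.morePNcats}, the argument is uniform.

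First, I would list the side conditions beyond pure $\M$-adhesiveness that each FAGT theorem requires, and verify them for $\cPTs$. The Local Church–Rosser Theorem (Thm.~5.12) needs only $\M$-adhesiveness and the pushout/pullback compatibility already shown in Lemmas~\ref{l.POlSets} and~\ref{l.PBlSets}. The Parallelism Theorem (Thm.~5.18) additionally needs binary coproducts of rules, which lift from $\cSets$ to $\lSets$ by taking the disjoint union of underlying sets with the labelling defined componentwise, and then transfer to $\cPTs$ via the comma-category construction used in the proof of Theorem~\ref{th.madTS.PTs}. The Concurrency Theorem (Thm.~5.23) needs an $\E$-$\M$ pair factorization; I would construct it by taking jointly surjective/injective factorizations in $\cSets^\oplus$ and labelling the image in $\lSets$ using the pointwise maximum trick already used in Lemma~\ref{l.PBlSets}.

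Second, for the Embedding and Extension Theorems (Thms.~6.14 and~6.16) I would verify the existence of initial pushouts over $\M$-morphisms. These exist in $\cSets$, and the boundary/context labels are forced by strictness ($l=l'\circ f$ for $f\in\M$), so the initial pushout in $\cSets$ lifts uniquely to $\lSets$ and thence, componentwise, to $\cPTs$. For the Local Confluence Theorem and completeness of critical pairs (Thm.~6.28 and Lemma~6.22) one additionally needs that the chosen $\E$-$\M$ pair factorization class is closed under pushouts and contains enough critical overlaps; both properties transfer from $\cSets$ because the forgetful functor $G:\lSets\to\cSets$ preserves limits and the free commutative monoid functor preserves pullbacks along injective maps.

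The main obstacle will be the bookkeeping around the order $\le$ on labels at each lifting step: pushout-style and coproduct-style constructions push labels forward componentwise, whereas pullback-style and factorization-style constructions require the pointwise maximum in order to keep the projections order-preserving. Once this distinction is handled consistently — and it is forced by the proofs of Lemmas~\ref{l.POlSets} and~\ref{l.PBlSets} — each FAGT theorem applies verbatim, and the corollary is simply the assembled list of its conclusions.
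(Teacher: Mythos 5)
The paper itself gives this corollary no proof at all: it is stated as a direct harvest of the cited results in \cite{FAGT}, justified implicitly by Theorem~\ref{th.madTS.PTs} (resp.\ Theorem~\ref{th.madTS.lSets}), with the extra conditions beyond $\M$-adhesiveness (coproducts compatible with $\M$, $\mathcal{E}'$--$\mathcal{M}'$ pair factorization, initial pushouts) only \emph{listed} as additional requirements in the remark of Appendix~\ref{s.addProp} and never verified. Your overall route is therefore the same as the paper's, and your decision to actually check the side conditions theorem by theorem is more than the paper does; the identification of which FAGT theorem needs which extra property, and the coproduct construction in $\lSets$ (disjoint union with componentwise labels, injections strict), are fine.

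There is, however, a concrete flaw in your sketch of the $\mathcal{E}'$--$\mathcal{M}'$ pair factorization, on which both the Concurrency item and the Local Confluence item depend. You propose to label the intermediate (image) object by the ``pointwise maximum trick'' of Lemma~\ref{l.PBlSets}, but that inequality points the wrong way here: if $l_K(k)=\max\{l_{A_i}(a)\mid e_i(a)=k\}$, then for a particular preimage $a$ one gets $l_K\circ e_i(a)\ge l_{A_i}(a)$, whereas the morphism condition in $\lSets$ demands $l_K\circ e_i(a)\le l_{A_i}(a)$; so the jointly epimorphic part would in general fail to be a morphism. The correct labelling is instead \emph{forced} by strictness of $\mathcal{M}'$: since $m\colon (K,l_K)\to(B,l_B)$ must satisfy $l_K=l_B\circ m$, one takes $l_K$ inherited from the codomain, and then $l_K\circ e_i(a)=l_B\circ f_i(a)\le l_{A_i}(a)$ holds automatically. (A related caveat, which you inherit from the paper's own Lemma~\ref{l.PBlSets}, is that maxima of two-element subsets need not exist in an arbitrary poset; either the label alphabet must be assumed a lattice, or one must argue, as in the $\M$-stability part of that lemma, that the two candidate labels are comparable.) With the factorization repaired in this way, your assembly of the FAGT theorems goes through and matches the paper's intent.
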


\section{Basics of Reconfigurable Petri Nets}
\label{s.basic}
In this section we give the basic notions. Note that in \reconnet the underlying type of nets are decorated
place/transition nets.

  We use the algebraic approach to Petri nets, where the pre- and post-domain functions $pre,post: T \to P^\oplus$
 map the transitions $T$ to a multiset of places $P^\oplus$ given by the set of all linear sums over the set $P$.  A marking  is given by $m \in P^\oplus$ with $m= \sum_{p\in P} k_p\cdot p$. 
The $\le$ operator can be extended to linear sums: For  $m_1, m_2 \in P^\oplus$ with $m_1 = \sum_{p\in P} k_p\cdot p$ and $m_2 = \sum_{p\in P} l_p\cdot p$ we have $m_1 \leq m_2$  if and only if $k_p \leq l_p$ for all $p \in P$. The operations {\grqq +  \grqq} and {\grqq -- \grqq} can be extended accordingly.
\komtar{nur was dann auch gebraucht wird}{}

Here, we introduce reconfigurable place/transition nets with labels and subtyping of labels for the rules. These labels need a name space that  is given by a partial order $(A,\le, \g_A)$ with a greatest element,  $a\le \g_A$ for all $a \in A$.

\begin{definition}{Labelled place/transition  nets} \label{d.PT}
	A (marked  labelled place/transition) net is given by $N = (P, T, pre, post, \pl, \tl, \mrk)$  over the namespace $A=(A_P,A_T)$ with partial orders $(A_{P},\le_A,\g_p)$ and $(A_{T},\le_T,\g_T)$.
	$P$ is a set of places,  $T$ is a set of transitions. $pre : T \to P^\oplus$ maps a transition to its $pre$-domain  and $post : T \to P^\oplus$ maps it to its $post$-domain. Moreover, $\pl: P \to (A_{P},\le_A,\g_p)$ is a label function mapping places  to a name space,
  $\tl : T \to (A_{T},\le_T,\g_T)$ is a label function mapping  transitions to a name space
		and $\mrk \in P^\oplus$ is the marking denoted by a multiset of places.
		
A transition $t \in T$ is $\mrk$-enabled for a marking
$\mrk \in P^\oplus$ if we have  $pre(t) \leq \mrk$.  The follower
marking $m^\prime$ is computed by $\mrk^\prime=\mrk - pre(t) + post(t)$ and represents   the result of a  firing step $\mrk\fire{t} \mrk^\prime$. 
	\end{definition}
	
	The labelling function is provided with an order for subtyping, this allows more abstract rules that can be applied for occurrences with lesser labels, for an example see Sect.\ref{s.intEx}.

	A reconfigurable Petri net $RN=(N,\R)$ consists of a Petri net $N$ and a set of  rules $\R$.  This  allows  reconfigurable Petri nets to modify themselves. Rules are defined by a span  of net morphisms 
   $ r  = (L \gets K \to R)$
where $L$ is the left-hand side and  $K$ is an interface between $L$ and $R$ the right-hand side. 
The basic idea is to find $L$ in the net $N$ and  replace it by $R$.  An occurrence morphism $o:L \to N$ is required to identify the relevant parts of the left-hand side $L$ in $N$.

Net morphisms  are given as a pair of mappings for the places and the transitions preserving the structure, the labels and the marking. Given two   nets $N_1$ and $N_2$ as in Def.~\ref{d.PT}  a net morphism $f:N_1 \to N_2$  is given by $f=(f_P:P_1 \to P_2,f_T:T_1 \to T_2)$, so that $pre_2 \circ f_T = f_P^\oplus \circ pre_1$ and $post_2 \circ f_T = f_P^\oplus \circ post_1$ and $m_1(p) \le m_2(f_P(p))$ for all $p \in P_1$. 
	The labels are mapped so that  $\tl_2 \circ f_T (t)  \le \tl_1(t)$ for all $t\in T_1$ and $\pl_2 \circ f_p (p)  \le \pl_1(p)$ for all $p\in P_1$.

$$ 
	\xymatrix{
	         && T_1 \ar@<1mm>[rr]^{pre_1}  \ar@<-1mm>[rr]_{post_1}  \ar[dll]^{\tl_1} 
					          \ar[dd]|{f_T}
	         &&  {P_1}^\oplus  \ar[drr]^{\pl_1}   
					          \ar[dd]|{{f_P}^\oplus}\\
	               (A_T,\le_T,g_T)
	&&&&&& (A_P,\le_P,g_P)\\
	         && T_2 \ar@<1mm>[rr]^{pre_2}  \ar@<-1mm>[rr]_{post_2}  \ar[ull]^{\tl_2} 
	         &&  {P_2}^\oplus \ar[urr]^{\pl_2}\\
	}
	$$

Moreover, the morphism $f$ is called strict if  both  $f_P$ and $f_T$  are injective, if $\tl_2 \circ f_T  = \tl_1$  and $\pl_2 \circ f_p= \pl_1$, and if
 $ m_1(p) =m_2(f_P(p))$  holds for all $p \in P_1$. 

 A transformation step 
$N \deriv{(r,o)} M$ via rule $r$  can be constructed in two steps by the  commutative squares (1) and (2) in Fig.~\ref{dpo}.
\label{glue} Given a rule with an occurrence $o:L\to N$ the \textit{gluing condition}  has to be satisfied in order to apply a rule at a given occurrence. Its satisfaction requires that the deletion of a place implies the deletion of the adjacent transitions, and that the deleted place's marking does not contain more tokens than the corresponding place in $L$. 

\begin{figure}
	\centering
$
\xymatrix{  L  \ar[d]|o \ar@{}[dr]|{\bf (PO1)}   
            & K \ar[l] \ar[r] \ar[d]   \ar@{}[dr]|{\bf (PO2)}
            & R \ar[d]\\
             N
            & D \ar[l] \ar[r]
            & M
              }
 $
\caption{\label{dpo}  Net transformation}
\end{figure}
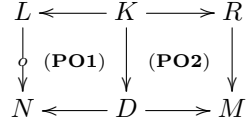

A reconfigurable Petri net $N$ can either fire an activated transition or execute a transformation step$N \xRightarrow{\text{(r,o)}} M$.
Figure \ref{dpo} illustrates the transformation of a net using two pushouts $(PO1)$ and $PO(2)$.

This is possible because  nets with labels and subtyping  can be proven to be an $\M$-adhesive  category, see Sect.~\ref{s.subtype}.
Hence these results  hold for the corresponding type of labelled Petri net:
	\begin{itemize}
		\item Local Church Rosser Theorem for pairwise analysis of sequential and parallel independence
               \\ \hfill see Thm. 5.12 in \cite{FAGT}
\item Parallelism Theorem for applying independent rules and transformations in parallel 
               \\ \hfill see Thm. 5.18 in \cite{FAGT}
\item Concurrency Theorem for applying E-related dependent rules simultaneously 
               \\ \hfill see Thm. 5.23 in \cite{FAGT}
\item Embedding and Extension Theorem for transferring transformations and analysis results to more complex scenarios
               \\ \hfill see Thms. 6.14 and 6.16 in \cite{FAGT}
\item Local Confluence Theorem and Completeness of critical pairs for analyzing conflicts and for showing local Confluence 
               \\ \hfill see Thm. 6.28 and Lemma 6.22 in \cite{FAGT}

\end{itemize}

\section{Hierarchies of Nets and Rules}
\label{s.hier}
 A \HRPN \   uses  \textit{substitution transitions} to implement the hierarchy. A substitution transition is a special kind of transition that itself does not fire,  instead it contains a subnet that defines the behavior that takes place in its stead.
Following this basic definition of substitution transitions, different implementations suited for specific purposes are possible, this work focuses on the variant of the substitution transition based hierarchical Petri net that have been presented in \cite{jensen2009}.

Each substitution transition has its own subnet with its own local rules.
All places that share an edge with a substitution transition are called the transition's  \textit{connecting places}. For each connecting place of the substitution transition there exists a corresponding connecting place in the transition's subnet  with the same marking.
Via these places tokens enter and leave the subnet. A transition that  fires  is from either the main net or some subnet, but no substitution transition. Any net can contain multiple substitution transitions  each instantiating exactly its own subnet. Although multiple substitution transitions may instantiate the same subnet layout, each substitution transition has it's own permanent instance.
This leads to a behaviour of the main nets that relies solely on the firing of the subnets, i.e the firing of the flattened net.

\begin{figure}[H]\label{fig:PFPP}
\centerline{\includegraphics[width=0.40\textwidth]{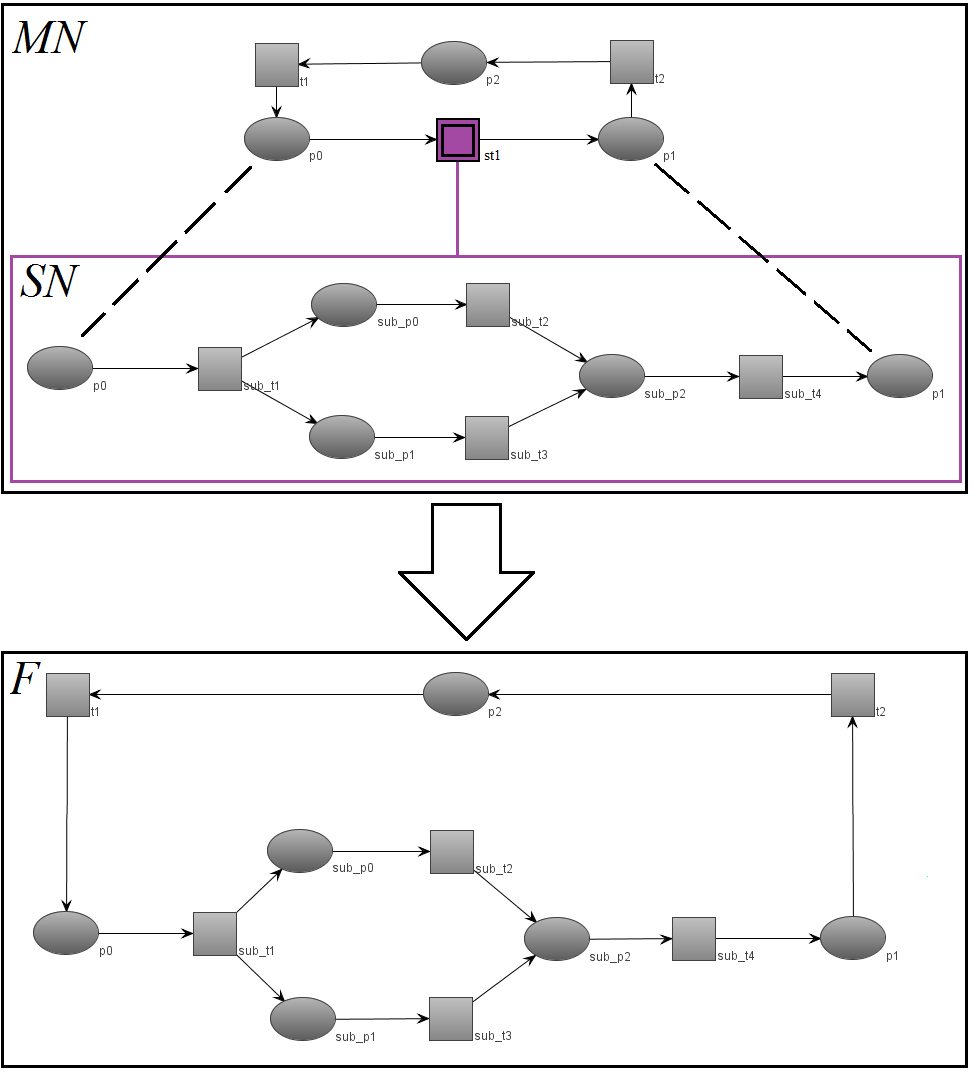}}
\caption{\label{fig_PFPP}{Flattening of a substitution transition. }}
\end{figure}

Figure \ref{fig_PFPP} shows in the top half a hierarchical net with it's main net $MN$ and a subnet $SN$. In the main net the substitution transition $st1$ has two connecting places: $p0$ has an edges connecting it to $st1$ and $st1$ has an edge connecting it to the place $p1$. These places can also be found in the subnet as connecting places with edges to and from different transitions. If tokens are added to the place $p0$ via the transition $t1$ these  also appear in the subnet. There $SN$s transition $sub\_t1$ can fire and remove tokens from $p0$ resulting in the removal of the same tokens from $p0$ in $MN$. 

Subnets may contain substitution transitions containing further subnets resulting in a nested hierarchy.


 We have \textit{local rules} and  \textit{global rules}. Local rules are given for a subnet only, whereas global rules belong to the hierarchical net and can be applied in all subnets since their labels are greater than the labels in the subnet. For details see Subsection~ \ref{ss.namespace}. The name space is given by the disjoint union of all local name spaces, so that local rules can be applied only with in the given subnet. 
Local rules  respect the  hierarchical net borders that means no transformation  may effect more than one (sub-)net. Hence, one  restrictions is imposed on the  rules: Substitution transitions may not  be part of a  rule. As a consequence connecting places may not be deleted or added by a  rule, but they can be part of one. since connection places are neighbours of substitution transitions that cannot occur in a rule, they can be neither added nor deleted.

The definition of the reconfigurable hierarchical Petri net requires the substitution transition together with its adjacent places, called  net $Net(t)$ of a transition $t$.
\begin{definition}{$Net(t)$} \label{nt}
Given $N=(P,T,pre,post,\pl,t_{name})$ then for a transtion $t\in T$  the net of $t$ is the net 
$Net(t)=(^{\bullet}t \cup t^{\bullet}, t, pre_{|t}, post_{|t}, p_{name_{|^{\bullet}t \cup t^{\bullet}}}, t_{name_{|t}})$.
\end{definition}
With this reconfigurable hierarchical Petri nets can be formally defined.
\begin{definition}{Hierarchical reconfigurable Petri net} \label{d.hPN}
  A \HRPN \ $HN=(RN, A, GR)$ is given by a reconfigurable net with substitutions $RN=(N,\R^N, {SR}^N)$, a name space $A=(\A{P}{},\A{T}{})$ and a set of global rules $GR$ over $A=(\A{P}{},\A{T}{})$, so that 
	
	\begin{itemize}
		\item  $N = (P, T, pre, post, \pl,\tl, M)$ is a place/transition net  over  $(\A{P}{N},\A{T}{N})$ so that
	\begin{itemize}
		  \item  $P$ is a set of places. 
      \item $T$ is a set of transitions that  contains substitution transitions $sT \subseteq T$.
      \item $pre:T\to P^\oplus $ is a function used for the pre-domain of each transition.  
      \item $post:T\to P^\oplus $ is a function used for the  post-domains of each transition.      
      \item $\tl: T \xrightarrow{} \A{T}{N}$  is a naming function for transitions,  where
			         substitution transitions have their own name space $A_{sT}\subseteq \A{T}{N}$ 
							 so that   $\tl(sT) \subseteq A_{sT} $ and  injective $\tl_{|sT}$. Moreover 
	                 $\tl(T \backslash sT) \subseteq A_T^{RN} \backslash A_{sT}$.
      \item $\pl: P \xrightarrow{} \A{P}{N}$ is a naming function for places, where
									the set of connecting places $cP =\{ \bullet t \cup t \bullet \mid \, t \in sT\} \subseteq \P$is  given 
									by the neighbourhood of the substitution transitions and  the name space of the connecting places 
									 $A_{cP} \subseteq \A{P}{N}$ satisfies   $ \pl(cP) \subseteq \A{P}{N}$.									
      \item $M$  is a set of tokens by $M \in P^\oplus$.
	\end{itemize}
	\item $\R^N$is a set of local rules over $(\A{P}{N}, \A{T}{N}\setminus A_{sT})$.
	\item $SR^N$ is a set of substitution rules together with a mapping if substitution transition to substitution rules $subst: sT \to SR^N$
	 so that $subst(t) = (Net(t)   \xleftarrow{}  CP(t) \xrightarrow{} SN^t)$  with
\begin{itemize}
	\item the interface $CP(t) = (^{\bullet}t \cup t^{\bullet}, \emptyset, \emptyset, \emptyset, \pl_{|{^\bullet}t \cup t^{\bullet}},\emptyset$ consisting of connecting  places only.

   \item a reconfigurable net with substitutions $SN^t =( RN^t, \R^t, SR^t)$ over $A^t=(\A{P}{t},\A{T}{t})$ with
	  $A_{cP} \subseteq \A{P}{t}$.
\end{itemize}
	\end{itemize}
\end{definition}

Figure \ref{fig_FNR} shows an example for a very basic substitution rule.

\begin{figure*}[ht]\label{fig:FNR}
\centerline{\includegraphics[width=\linewidth]{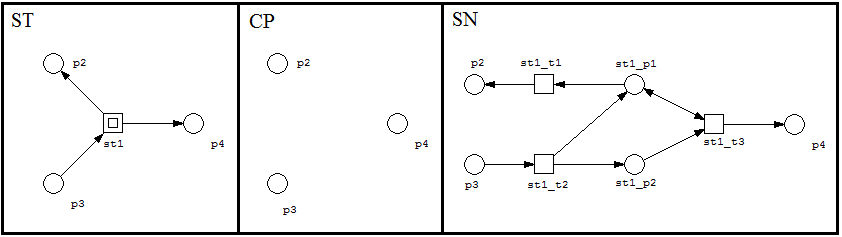}}
\caption{\label{fig_FNR}{ An exemplary basic substitution rule.  }}
\end{figure*}

\cite{jensen2009} Chapter 5 states that  the flattening of a hierarchical net that uses substitution transitions must remove each substitution transition    and insert its subnet  into the supernet by fusing the connecting places. 
\komtar{BSP aus PJ2?}{}
This process corresponds to applying the substitution rules from Definition \ref{hpndef}. 
Only one substitution for each substitution transition  is applicable to $RN$.

\begin{corollary}{Set of substitions $S^N$}
Given a reconfigurable net with substitutions $RN=(N,\R^N, {SR}^N)$.
For every substitution transition $t \in sT$ and its substitution rule  $sr= subst(t)$ there exists exactly one injective occurrences $o$ of $sr$ 
These substitutions are collected in a set of subsitutions $S^N =\{(sr,o)\mid  sr= subst(t) \text{ and } o: Net(t) \hookrightarrow RN\}$.
\end{corollary}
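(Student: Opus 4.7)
The plan is to exploit the strong label discipline that Definition~\ref{d.hPN} imposes on substitution transitions: their labels live in the dedicated name space $A_{sT}$, are disjoint from the labels of ordinary transitions ($\tl(T\setminus sT)\subseteq A_T^{RN}\setminus A_{sT}$), and $\tl_{|sT}$ is injective. This, combined with the sub-labelling condition on net morphisms (labels may only get smaller), will pin down the transition component of any injective occurrence uniquely; the place component is then forced by compatibility with $pre$ and $post$.

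First, I would unfold what an occurrence morphism $o=(o_P,o_T):Net(t)\hookrightarrow N$ is. The left-hand side $Net(t)$ has exactly one transition, namely $t$ itself, labelled by $\tl(t)\in A_{sT}$, together with the places in $^\bullet t\cup t^\bullet$. Injectivity of $o$ together with the sub-labelling condition $\tl_N\circ o_T(t)\le \tl(t)$ forces $o_T(t)$ to be a transition of $N$ whose label lies below an element of $A_{sT}$ in the order on $A_T$. Since the subtyping is only among labels within the ordinary part, and since $A_{sT}$ is designed so that substitution labels are incomparable with ordinary ones (they form a distinct layer of the name space, cf.\ Subsect.~\ref{ss.namespace}), $o_T(t)$ must itself be a substitution transition. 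Injectivity of $\tl_{|sT}$ then forces $o_T(t)=t$.

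Next I would use the net-morphism conditions $pre_N\circ o_T=o_P^\oplus\circ pre$ and $post_N\circ o_T=o_P^\oplus\circ post$. Because $o_T(t)=t$, the multisets $pre(t)$ and $post(t)$ in $Net(t)$ must be mapped by $o_P^\oplus$ exactly onto the multisets $pre_N(t)$ and $post_N(t)$ in $N$; but these are, by definition of $Net(t)$, the same multisets over the same places. Since $o_P$ is injective and the places of $Net(t)$ are precisely $^\bullet t\cup t^\bullet$, the only option is $o_P(p)=p$ for every $p\in {}^\bullet t\cup t^\bullet$. Hence $o$ is the identity embedding $Net(t)\hookrightarrow N$, which proves both existence and uniqueness.

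The main obstacle I anticipate is the careful use of the name-space construction from Subsect.~\ref{ss.namespace} to justify that an injective morphism cannot map a substitution transition to a non-substitution one via the order $\le_T$. Once that observation is in place the rest is a direct consequence of injectivity and the defining equations of a net morphism. Finally, gathering the unique occurrences over all $t\in sT$ into $S^N$ is just a bookkeeping step, completing the statement.
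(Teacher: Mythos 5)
The paper itself offers no proof of this corollary --- it is stated as an immediate consequence of Def.~\ref{d.hPN} and Def.~\ref{nt} (``only one substitution for each substitution transition is applicable''), so your reconstruction follows the intended route: pin down $o_T$ by the label discipline on $sT$, then pin down $o_P$ by the $pre/post$ equations. However, two of your steps claim more than the cited definitions give. First, for the transition component you need not only injectivity of $\tl_{|sT}$ but the stronger fact that no transition label of $N$ lies strictly \emph{below} a substitution label in $\le_T$ (otherwise $o_T(t)$ could be an ordinary transition with a smaller label, or a different substitution transition $t'$ with $\tl(t')<\tl(t)$, which injectivity alone does not exclude). Def.~\ref{d.hPN} only states disjointness, $\tl(T\setminus sT)\subseteq A_T^{RN}\setminus A_{sT}$, and disjointness is not incomparability; Subsect.~\ref{ss.namespace} concerns the name space for local versus global \emph{rules}, not $A_{sT}$, so the ``distinct layer'' property you appeal to has to be made an explicit assumption (discreteness or minimality of $A_{sT}$ in the order).

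Second, the place step does not follow as you state it. From $pre_N\circ o_T=o_P^\oplus\circ pre$ and $post_N\circ o_T=o_P^\oplus\circ post$ with $o_T(t)=t$ and $o_P$ injective you only get that $o_P$ maps the multiset $pre_N(t)$ onto itself and likewise for $post_N(t)$, i.e.\ $o_P$ is a multiplicity-preserving \emph{permutation} of $^{\bullet}t\cup t^{\bullet}$, not necessarily the identity. Concretely, if $pre_N(t)=p_1+p_2$ with $\pl(p_1)=\pl(p_2)$ and equal markings, swapping $p_1$ and $p_2$ yields a second injective occurrence, so the corollary's uniqueness claim fails without an additional assumption. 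To close the argument you must invoke the sub-labelling condition $\pl_N\circ o_P(p)\le \pl(p)$ together with injective (and trivially ordered) naming of the connecting places $cP$ --- which is what the (garbled) condition on $A_{cP}$ in Def.~\ref{d.hPN} and the prefix-naming scheme of Sect.~\ref{s.hier_reconnet} evidently intend --- and only then conclude $o_P=\mathrm{id}$. With these two assumptions made explicit, your existence part (the inclusion $Net(t)\hookrightarrow N$ is an injective occurrence) and the bookkeeping defining $S^N$ are fine.
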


 Due to the global and local rules  flattening construction is more complex than for a normal hierarchical Petri net. Flattening of a normal hierarchical Petri net looses all information of the hierarchical borders. but this information is needed for the correct application of local and global  rules in the flattened net.

First we investigate the parallel independence \cite{FAGT} of the subsitution rules.
\begin{lemma}{Pairwise Independence of Substitutions}\label{sindislemma}

Given a reconfigurable net with substitutions $RN=(N,\R^N, {SR}^N)$. Any two substitutions $s_1, s_2 \in S^N$ are pair-wise independent from one another if $s_1 \neq s_2$.
\end{lemma}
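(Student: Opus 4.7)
The plan is to reduce pairwise independence, in the sense of Thm.~5.12 of \cite{FAGT}, to an explicit combinatorial statement about what each substitution rule actually removes from $RN$, and then observe that two distinct substitution transitions necessarily remove disjoint material. Concretely, parallel independence for two rules $s_i = (L_i \leftarrow K_i \to R_i)$ with occurrences $o_i : L_i \hookrightarrow RN$ is equivalent, in any $\M$-adhesive category, to the condition that the overlap $o_1(L_1) \cap o_2(L_2)$ be contained in $o_1(K_1) \cap o_2(K_2)$. So I need to show that in $RN$ the two images can meet only in their interface parts.

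First I would unfold the shape of a substitution rule. By Def.~\ref{d.hPN}, $s_i = subst(t_i) = (Net(t_i) \leftarrow CP(t_i) \to SN^{t_i})$, where $Net(t_i)$ consists of the substitution transition $t_i$ together with all adjacent places ${}^\bullet t_i \cup t_i^\bullet$, while $CP(t_i)$ already contains all of those adjacent places and only lacks the transition. Hence $L_i \setminus K_i = \{t_i\}$ at the transition level, and is empty at the place level; no place is ever deleted by a substitution rule.

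Next, given $s_1 \neq s_2$ in $S^N$, I would argue that the underlying substitution transitions differ. By the preceding corollary each $t \in sT$ determines a unique injective occurrence of $subst(t)$ in $RN$, so $t_1 = t_2$ would force $subst(t_1) = subst(t_2)$ and also equal occurrences, contradicting $s_1 \neq s_2$. Since each $Net(t_i)$ contains exactly one transition, namely $t_i$, the images $o_1(Net(t_1))$ and $o_2(Net(t_2))$ have disjoint transition parts; in particular the deleted elements $t_1$ and $t_2$ each lie outside the opposite occurrence. Any common element of the two images must therefore be a place lying in both ${}^\bullet t_1 \cup t_1^\bullet$ and ${}^\bullet t_2 \cup t_2^\bullet$, i.e.\ a connecting place, which by construction of $CP(t_i)$ sits in both interfaces.

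With the overlap thus located entirely inside the interfaces, the parallel-independence witnesses $L_1 \to D_2$ and $L_2 \to D_1$ exist by the general $\M$-adhesive machinery, completing the proof. I do not foresee a genuine obstacle; the only mild subtlety is the implication $s_1 \neq s_2 \Rightarrow t_1 \neq t_2$, which is a direct consequence of the uniqueness of the injective occurrence stated just before the lemma, together with the fact that $subst$ is a function on $sT$.
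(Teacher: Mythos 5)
Your proposal is correct and follows essentially the same route as the paper's own proof: it reduces parallel independence to the set-theoretic overlap condition $o_1(L_1)\cap o_2(L_2)\subseteq o_1(l_1(CP_1))\cap o_2(l_2(CP_2))$, derives $t_1\neq t_2$ from $s_1\neq s_2$ via the uniqueness of the injective occurrences and the functionality of $subst$, and observes that any shared element must be a connecting place lying in both interfaces. No gaps; the argument matches the paper's treatment of the transition and place components.
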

If any two $s_1, s_2 \in S$ with $s_1 \neq s_2$ are pairwise parallel independent, with the help of the Local Church-Rosser Theorem, it can be deducted that they are also sequentially independent \cite{FAGT}. All substitution rules $sr$ together with their  occurrences are independent from another if  any two $sr_1,sr_2$ with $sr_1 \neq sr_2$ are pairwise independent. So the proof of parallel independence of two arbitrary substitutions $s_1,s_2 \in S$ is sufficient to prove Lemma \ref{sindislemma}.  
\begin{proof}\label{sindis} 
We show for two arbitrary  $s_1 \neq s_2$   the set theoretic representation of parallel independence $ o_{1}(ST_1) \cap o_{2}(ST_2) \subseteq o_{1}(l_1(CP_1)) \cap o_{2}(l_2(CP_2)) $.
%
 %
$$\xymatrix{
        SN^{t_1}  \ar[d]|{n_1}
    &  CP(t_1)     \ar[d]|{k_1}   \ar[l]|{r_1} \ar[r]|{l_1}
		&  Net(t_1)   \ar[dr]|{o_1}  \ar@/^2mm/@{-->}[rrrd]|(0.6){g_1}
  &&  Net(t_2)    \ar[dl]|{o_2}  \ar@/_2mm/@{-->}[llld]|(0.6){g_2}
    &  CP(t_2)      \ar[d]|{k_2}   \ar[l]|{r_2} \ar[r]|{l_2}
     & SN^{t_2}   \ar[d]|{n_2}\\
		    H_1
		&  D_1  \ar[l]|{r'_1} \ar[rr]|{l'_1}
	 && G
	&&  D_2   \ar[ll]|{r'_2} \ar[r]|{l'_2}
	  &  H_2
}
$$
The left-hand side of any rule $rs$ of  $(rs,o) \in S^N$ contains by definition \ref{d.hPN} only a net $Net(t)$. As specified in Def.~\ref{nt} $Net(t)$ contains only a substitution transition $t$ and $t$'s pre- and post-domains. The interface $CP(t)$ contains only $t$'s pre- and post-domains.  
Considering two substitutions $s_1, s_2 \in S^N$ with $s_1 \neq s_2$, the intersection between their occurrences only considering transitions must be empty because otherwise $t_1 = t_2$ and thus $s1 = s2$. Since $CP(t_1)$ only contains places and since $Net(t_1)$ contains one distinct transition $t_1$ and $Net(t_2)$ the another one $t_2$, it follows:
$o_{1T}(Net(t_1)) \cap o_{2T}(Net(t_2)) = \emptyset  \subseteq \emptyset = o_{1T}(l_1(CP(t_1))) \cap o_{2T}(l_2(CP(t_2)))$

Now we  consider the  places. Let
$ p \in o_{1P}(Net(t_1)) \cap o_{2P}(Net(t_2))$. Hence $p \in (^{\bullet}t_1 \cup t_1^{\bullet}) \cap (^{\bullet}t_2 \cup t_2^{\bullet})$
that is $p \in CP(t_1) \cap CP(t_2)$ by  definition of $CP$.\\
Since  $l_1, l_2, o_{1P}$ and $o_{2P}$ are functions we have $p \in (l_1(CP(t_1))) \cap (l_2(CP(t_2)))$
 and $p \in o_{1P}(l_1(CP(t_1))) \cap o_{2P}(l_2(CP(t_2)))$.
Thus: $
o_{1P}(Net(t_1)) \cap o_{2P}(Net(t_2)) \subseteq o_{1P}(l_1(CP(t_1))) \cap o_{2P}(l_2(CP(t_2))) $
which proves any two $s_1, s_2 \in S$ with $s_1 \neq s_2$ are pairwise parallel independent. 
\end{proof}

With the help of Lemma \ref{sindislemma} now  Theorem \ref{fwelldeftheorem} can be proven.
\begin{theorem}{$FLAT(N,SR^N)$ Flattening of reconfigurable net with substitutions}\label{fwelldeftheorem}
Given a reconfigurable net with substitutions$(N,SR^N)$ any possible  transformation sequence of rules $S^N$  yields the same (up to isomorphism) well-defined net $N \deriv{S} FLAT(N,SR^N)$.
\end{theorem}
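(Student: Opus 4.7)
The plan is to leverage the pairwise parallel independence from Lemma \ref{sindislemma} together with the Local Church-Rosser Theorem and the Parallelism Theorem cited for $\M$-adhesive categories, proceeding by induction on $|S^N|$ to show that any two orderings of the substitutions in $S^N$ yield isomorphic transformation results.

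First I would check that every $(sr,o)\in S^N$ is actually applicable at $N$, i.e.\ that the gluing condition is satisfied. By Definition \ref{nt}, $Net(t)$ consists of the substitution transition $t$ together with its adjacent connecting places $^\bullet t \cup t^\bullet$, and the interface $CP(t)$ in Definition \ref{d.hPN} retains all of these places. Hence the only element deleted by $sr$ is the transition $t$ itself; no place is removed, so no dangling arc can arise, and the gluing condition holds trivially for the unique injective occurrence guaranteed by the corollary preceding the theorem.

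Next I would run the induction. The base case $S^N = \emptyset$ is immediate with $FLAT(N,SR^N) = N$. For the step, let $S = (s_1,\dots,s_n)$ and $S' = (s_{\pi(1)},\dots,s_{\pi(n)})$ be two orderings of $S^N$. Lemma \ref{sindislemma} tells us that every pair of distinct substitutions is parallel independent, so the Local Church-Rosser Theorem (Thm.~5.12 in \cite{FAGT}) converts parallel into sequential independence, allowing any two adjacent substitutions in a derivation sequence to be swapped while preserving the resulting net up to isomorphism. Since every permutation $\pi$ is a composition of adjacent transpositions, $S$ and $S'$ yield isomorphic nets. Applying the Parallelism Theorem (Thm.~5.18 in \cite{FAGT}) to the whole family $S^N$ gives a single parallel derivation whose target is, up to isomorphism, the canonical net $FLAT(N,SR^N)$, so the result is well-defined.

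The main obstacle is guaranteeing that independence is preserved along the induction: after firing $s_i$, the remaining substitutions $s_j$ with $j \neq i$ must still have valid occurrences in the transformed net. This is not automatic, but it follows from the proof of Lemma \ref{sindislemma}: the only places of $Net(t_j)$ that can belong to the image of $s_i$ are connecting places, and connecting places are preserved in $CP(t_i)$, so the occurrence $o_j$ lifts uniquely through the pushout complement into the intermediate net via the morphism $g_j$ depicted in the diagram of Lemma \ref{sindislemma}. With this lifting in hand the inductive step is immediate, completing the proof.
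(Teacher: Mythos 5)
Your proposal is correct and follows essentially the same route as the paper: both rest on the pairwise independence of substitutions (Lemma~\ref{sindislemma}) and then invoke the Local Church-Rosser/Parallelism results to conclude that all orderings are equivalent and that a unique (up to isomorphism) parallel transformation defines $FLAT(N,SR^N)$. The paper simply cites \cite{rozenberg1997} for this equivalence, whereas you spell out the permutation induction, the gluing-condition check, and the lifting of the remaining occurrences, which are exactly the details that citation encapsulates.
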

\begin{proof}
With all $s \in S$ being mutually independent, \cite{rozenberg1997} states all the transformation sequences $HN \xRightarrow{*} F$  are equivalent and there exists a parallel transformation sequence $HN \xRightarrow{\sum\nolimits_{s \in S} s} F$. Then we define  $FLAT(N,SR^N):=F $.
Such a parallel transformation sequence can always be constructed and is unique up to isomorphism. 
\end{proof}

The flattening of a \HRPN \ to a reconfigurable nets needs to include global as well as local rules and is given recursively based on  flattening of  nets with substitution.

\begin{definition}{Flattening}
The flattening is defined for an  hierarchical net $HN=( RN, A, GR)$  given by a reconfigurable net $RN=(N,\R^N, {SR}^N)$, an name space $A=(\A{P}{},\A{T}{})$ and a set of global rules $GR$ as given in  Def.~\ref{d.hPN} recursively by:
\begin{enumerate}
	\item   Given  $RN=(N,\R^N, {SR}^N)$ over $A=(\A{P}{N},\A{T}{N})$ with  substitution transitions $sT = \emptyset$ we have: \\
	            $flat(RN) =  (N,\R^N)$   over $A$
							
	\item    Given  $RN=(N,\R^N, {SR}^N)$ over $A=(\A{P}{N},\A{T}{N})$ with  substitution transitions $sT \neq \emptyset$ we have: \\
	           $flat(RN) = flat(FLAT(N,SR^N), \overline{\R},\overline{SR})$  over $\overline{A}$ with
	\begin{itemize} 
		\item $\overline{A} =  \biguplus_{t\in sT} ( \A{P}{t}\setminus A_{cp})  \uplus   A_{cP}$
		\item $\overline{\R}= (\biguplus_{t\in sT} \R_t) \uplus \R^N$
		\item $\overline{SR}= (\biguplus_{t\in sT} SR^t$
	\end{itemize}
	\item $flat(HN) = (N_{Flat},GR\cup \R_{Flat})$ over $\mathfrak{A}$ with $flat(RN) = (N_{Flat},\R_{Flat})$ where the name space $\mathfrak{A}$ ist the
	  union of the name spaces, so that the global  labels are greater than the corresponding local  labels (see Subsect.~\ref{ss.namespace}).
\end{enumerate}
\end{definition}

\begin{definition}{Well-defined \HRPN} \label{d.wellhPN} 
A \HRPN \ $HN=(RN, A, GR)$ is well-defined if and only if the $flat(HN)$ is well-defined.
\end{definition}

\subsection{Introductory  Example}
\label{s.intEx}

Reconfigurable Petri nets extend normal Petri nets to include the ability for dynamic change. This is achieved through the use of a rewriting system in the form of rules for the transformation of the net. This allows the modification of the net's structure at run time, which can be used in the modelling of dynamic reconfigurable hardware like FPGAs or flexible manufacturing systems.
When modelling such a system two kinds of changes need to be included, for one a change of state accomplished through the firing of Petri net transitions, but also the process itself can experience changes for which the rule based rewriting system is used.

Imagine some simple  but adaptive process that can alternatively execute three different tasks \texttt{task1}, \texttt{task2}, and \texttt{task3}. An abstract view of this process is given in Fig.~\ref{fig.AN}. 
\begin{figure}[H]
\begin{minipage}[b]{.49\textwidth}
		\includegraphics[width=\linewidth]{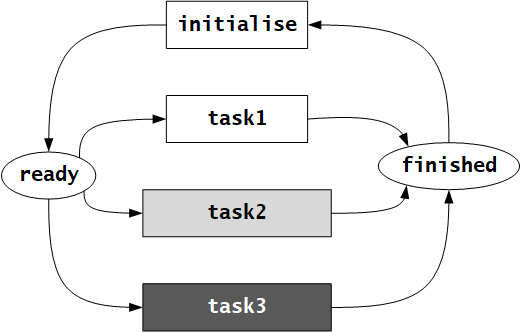}
	\caption{Abstract view of process: Net $AN$\label{fig.AN}}
	\end{minipage}
\hfill
\begin{minipage}[b]{.49\textwidth}
	\includegraphics[width=\linewidth]{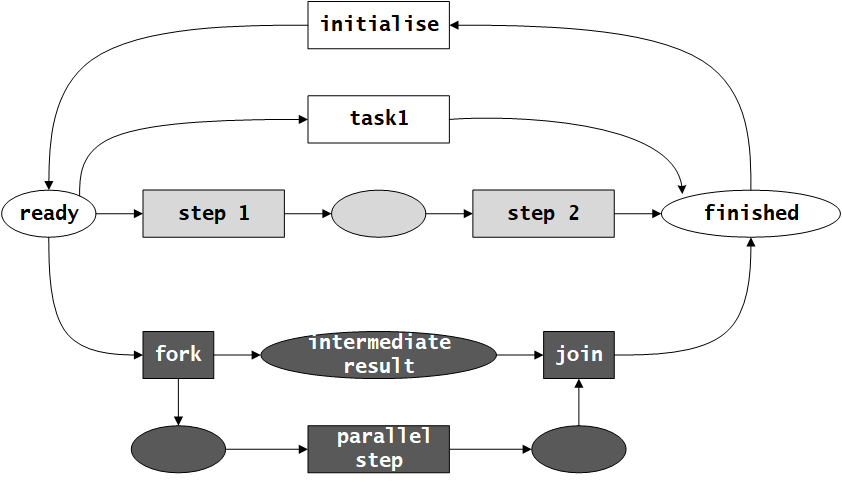}
	\caption{Flattened net\label{fig.FN}}
\end{minipage}
\end{figure}
The tasks  \texttt{task2} and \texttt{task3} are more complex and are 
given by subnets, where \texttt{task2} is a sequence of steps and \texttt{task3} includes some forking. The hierarchy concept in Sect.~\ref{s.hier} allows the substitution of the transitions with the subnets. The substitution of the transition \texttt{task2} replaces the transition and its adjacent places, that is $Net(\texttt{task2})$,  by the subnet $SN1$ and \texttt{task2} is replaced by $SN2$, both  in Fig.~\ref{fig.SN}. Applying these substitutions  to the abstract nets in  Fig.~\ref{fig.AN} yields the flatted net in Fig.~\ref{fig.FN}. 
\begin{figure}[h]
	\includegraphics[width=\linewidth]{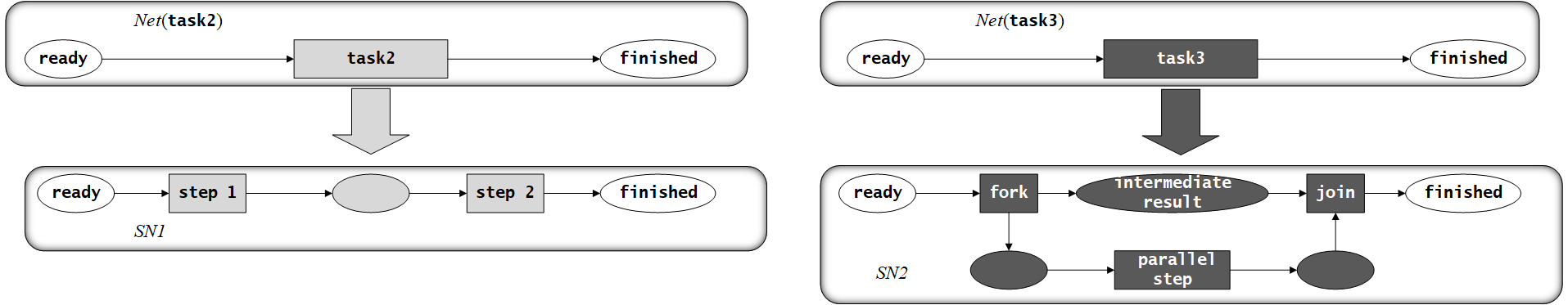}
	\caption{Substitution of transitions\label{fig.SN}}
\end{figure}

Now we add rules for the subnets for the adaptation of the tasks: \texttt{task1} is so simple, it requires no adaptation. In \texttt{task2}  the sequence of steps can be changed (rules \texttt{SN1:r1} and \texttt{SN1:r2}) or an intermediate steps is introduced  or removed ( rules \texttt{SN1:r3} and \texttt{SN1:r4}). So we have the four rules given in  light grey in Fig.~\ref{fig.locRules}.
In \texttt{task3} the intermediate step can be adapted  by rule \texttt{SN2:r5} so that parallel step may require something from the intermediate result. And this adaptation can be reversed by rule \texttt{SN2:r6}. both rules are given in dark grey in Fig.~\ref{fig.locRules}. These six rules are local rules, that should be only applied in the corresponding subnet.
\begin{figure}[H]
\centering
	\includegraphics[width=.9\textwidth]{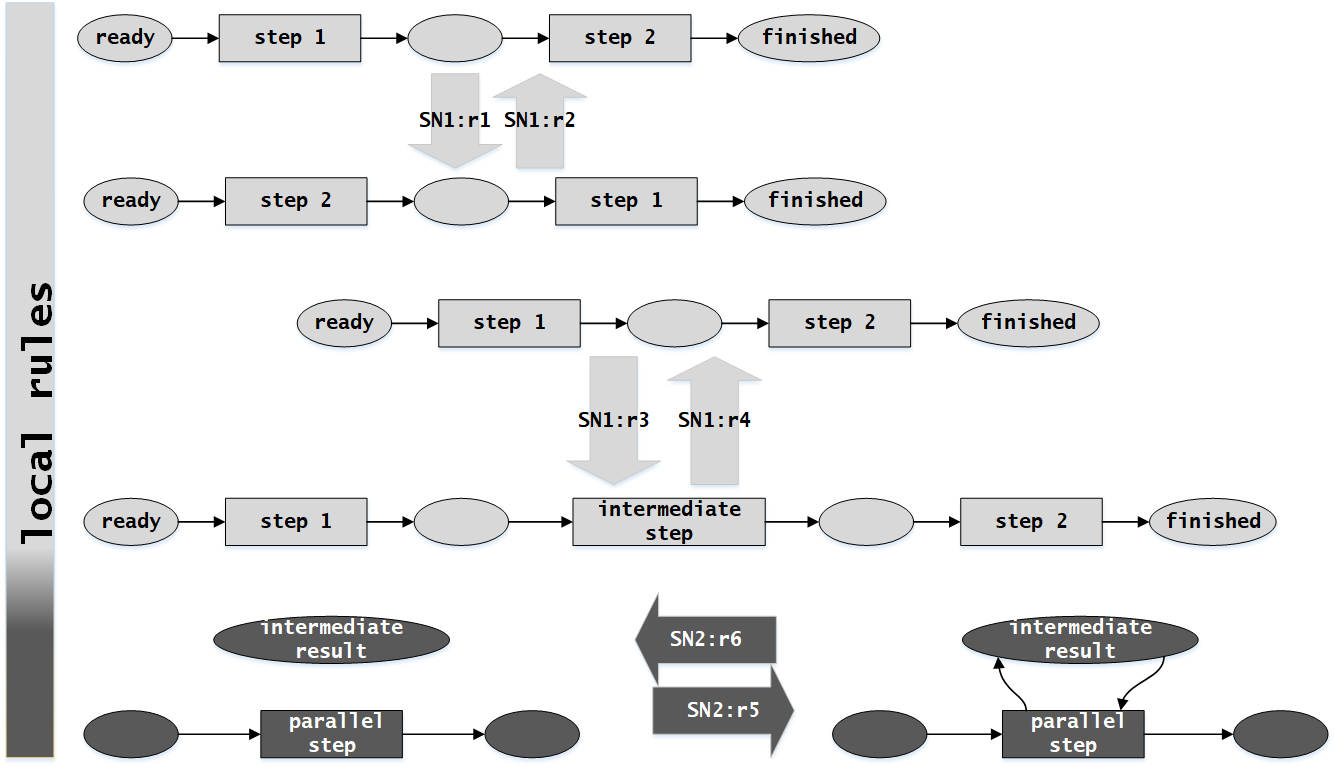}
	\caption{Local rules for the subnets $SN1$ and $SN2$}
	\label{fig.locRules}
\end{figure}

We have for the transitions the name space $A_T =\{\texttt{initialise}, \texttt{task1},\texttt{task2},\texttt{task3},\texttt{fork}, \texttt{join},$\\$ \texttt{step}, \texttt{step1},\texttt{step2}, \texttt{intermediate step}, \texttt{parallel step} \}$ that ensures the locality of the rules by the labels. 

Additionally, we want a global rule that adds to all steps a counting place. This rule is given below in Fig.~\ref{fig.globRule}. This rule can be applied at each transition with a lesser label.
The name space for the transition is ordered in the following way:\\
$\g_T \ge l$ for all $l \in A_T$ and\\
 $\texttt{step} \ge l$ for all $l\in \{ \texttt{step1},\texttt{step2}, \texttt{intermediate step}, \texttt{parallel step}\}$
\begin{figure}[H]
\centering
	\includegraphics[width=.9\textwidth]{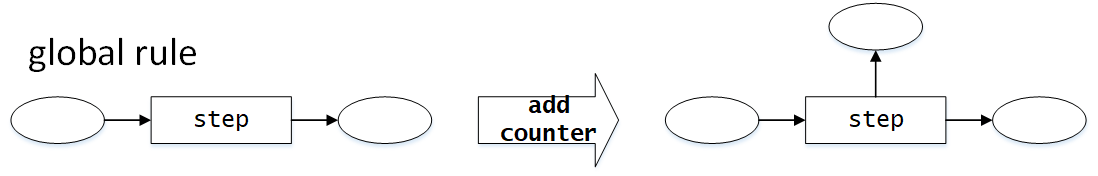}
	\caption{Global rule for adding counter}
	\label{fig.globRule}
\end{figure}

\section{Hierarchies in \reconnet}
\label{s.hier_reconnet}

During the simulation \reconnet's simulation engine  uses the flat representation of a \HRPN  \ for transition firing and transformation rule application, because this allows usings  \reconnet's simulation engine to handle the hierarchical net, i.e its flattened net.
However,for the user this will remain transparent and the visual interface will remain in a hierarchical view. While transitions are fired and transformations are made on the flat net the hierarchical view visualized the changes appropriately.
During the design phase of a \HRPN, in which the net designer develops the nets and transformation rules, true hierarchy is used and at the beginning of the simulation the flat net is acquired with the flattening process.
The application of local rules in the flat net needs one single name space for places and transitions ($A_P, A_T$). This name space needs to include all of the disjoint name spaces of the (sub-)nets. 
This single name space is created during the flattening. Whenever a subnet is inserted into its supernet all places and transitions that are not connecting places get a prefix to their names that is unique to the substitution transition that was replaced.
This way the naming preserves hierarchy borders and (sub-)net identities and so the names of places and transitions are specific enough that a rule meant for only a specific (sub-)net can be limited to the correct part of the flat net.
For persistence of a \HRPN \ from \reconnet  the \HRPN's flat net and the substitution rules.
The \HRPN is saved as a tuple of the main net, as a reconfigurable Petri net, its substitution rules and the flat net, so that $HN = \langle RN, SR, Flat(RN, SR) \rangle$.
So, the flat net can  be loaded directly and needs not to be computed each time again.

\subsection*{Flattening in \textit{ReConNet}}

In \textit{ReConNet} the flattening process can be realized as transformation unit \cite{kreowski2008graph} $HN \xRightarrow{\text{sr!}} F $ with $!$ as long as possible with injective occurrences. For the transformation unit an applicable substitution rule $sr$ with an occurrence is randomly picked and applied, this step is repeated until there no longer exists a $sr \in HN$ with an occurrence. 

\begin{lemma}[$HN \xRightarrow{*} F$ Produces a Well-defined Net F] \label{fwelldef}

The resulting F of  $HN \xRightarrow{\text{sr!}} F $ is well-defined up to isomorphism.
\end{lemma}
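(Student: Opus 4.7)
My plan is to reduce the claim to Theorem~\ref{fwelldeftheorem} by induction on the hierarchy depth, with Lemma~\ref{sindislemma} providing the key independence argument.

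First I would verify termination of the transformation unit. Each application of some $sr \in SR^N$ at an injective occurrence removes exactly one substitution transition from the current (sub-)net and inserts a copy of the associated subnet $SN^t$. Since $HN$ is a \HRPN\ of finite hierarchy depth and every subnet contains only finitely many substitution transitions, the total (multi-)set of substitution transitions ever to appear during the process is finite. Each step strictly decreases this multi-set, so $\xRightarrow{\text{sr!}}$ necessarily halts in some net $F$ containing no substitution transitions and hence no applicable $sr$.

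Next I would establish confluence of the relation $\xRightarrow{sr}$. At any intermediate state, the applicable substitution rules fall into two classes: those whose LHS $Net(t)$ matches a substitution transition $t$ of the main net currently being flattened, and those matching a transition in some already-inserted subnet. For rules of the first class, Lemma~\ref{sindislemma} gives pairwise parallel independence. For rules of the second class, independence follows by the same set-theoretic argument used in the proof of Lemma~\ref{sindislemma}: two distinct substitution transitions $t_1 \neq t_2$ share at most their connecting places, and connecting places lie in the interface $CP(t_i)$ of both substitution rules. Therefore any two distinct applicable substitutions are parallel independent, so by the Local Church-Rosser Theorem (cited in the paper) any two one-step divergences $HN_1 \leftRightarrow HN \Rightarrow HN_2$ can be joined by a common successor.

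With termination and local confluence at hand, Newman's Lemma yields confluence of $\xRightarrow{\text{sr!}}$, so the normal form $F$ is unique up to isomorphism. Equivalently, and perhaps more cleanly, one may induct on the hierarchy depth: for depth $0$ the claim is trivial; for depth $n+1$ apply Theorem~\ref{fwelldeftheorem} to the top level to obtain a unique $FLAT(N,SR^N)$, observe that the remaining substitution transitions now live strictly in subnets of depth $\le n$, and invoke the induction hypothesis on each of them (using again that substitutions in disjoint subnets do not interfere, since their $Net(t)$'s are place-disjoint).

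The main obstacle I anticipate is being precise about independence across hierarchy levels, namely arguing that a rule applied at the outer level cannot interfere with one applied inside a subnet that was just inserted. This reduces to checking that the freshly inserted subnet $SN^t$ of $t$ contains no connecting place of any other outer substitution transition $t' \neq t$ (which holds because the only shared places between $Net(t)$ and $Net(t')$ are common connecting places, and these are preserved unchanged by the gluing in the substitution rule). Once that is unpacked carefully, the reduction to Theorem~\ref{fwelldeftheorem} and the Local Church-Rosser Theorem is routine.
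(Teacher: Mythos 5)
Your proof is correct and rests on the same key ingredient as the paper's, namely the pairwise parallel independence of substitutions (Lemma~\ref{sindislemma}) together with the Local Church--Rosser Theorem, but you package the argument differently. The paper argues indirectly: it assumes two maximal sequences $HN \xRightarrow{\text{sr!}} F$ and $HN \xRightarrow{\text{sr!}} \widehat{F}$ with $F \not\equiv \widehat{F}$, locates a divergence $M \xRightarrow{s_i} M_i$, $M \xRightarrow{s_j} M_j$, and uses independence to swap the two steps and join them in a common $M_{ij}$, contradicting non-equivalence; it never discusses termination, nor the substitutions arising inside freshly inserted subnets, and it tacitly assumes both maximal sequences apply the same substitutions. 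You instead make the well-foundedness explicit (every $sr$-step consumes one substitution transition, and finite hierarchy depth with finite subnets bounds the supply), prove local confluence, and conclude by Newman's Lemma -- or alternatively by induction on hierarchy depth reducing to Theorem~\ref{fwelldeftheorem}. This buys a more complete argument: termination is exactly what justifies the paper's implicit assumption that maximal sequences exist and exhaust the same set of substitutions, and your treatment of rules matching transitions in already-inserted subnets addresses the nested case that the paper's appeal to Lemma~\ref{sindislemma} (stated for a single level) glosses over. One small caveat: a step does not literally shrink the multiset of currently present substitution transitions, since the inserted subnet may contribute new ones; you should phrase the measure as the remaining substitution transitions in the full unfolding (or use a multiset ordering over hierarchy depths), which is what your finite-depth argument in effect does.
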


\begin{proof}

With any two substitution rules $sr_1,sr_2 \in SR$ being pairwise independent, $F$ being well-defined up to isomorphism can be proven with an indirect approach:

If $F$ is not well-defined the transformation sequences $HN \xRightarrow{\text{sr!}} F $ and $HN \xRightarrow{\text{sr!}} \widehat{F} $ exists so that $F \not \equiv \widehat{F} $.
For this to be true there has to exists some $M$ so that:
\begin{equation}\label{dia_Fwell_1}
\begin{gathered}
\xymatrix{
	&&&& M_j\ar@{=>}[drr]^{\ast}|{\SelectTips{cm}{12}\object@{/}}|{}\\
	HN \ar@{=>}[rr]^{*} && M \ar@{=>}[urr]^{s_j} \ar@{=>}[drr]_{s_i} &&&&F\\
	&&&& M_i\ar@{=>}[urr]_{\ast}|{\SelectTips{cm}{12}\object@{/}}|{ }
}
\end{gathered}
\end{equation}
Since for $HN \xRightarrow{\text{sr!}} F $ and $HN \xRightarrow{\text{sr!}} \widehat{F} $ both substitutions $s_i$ and $s_j$ have to be applied, all $s \in S$ are pairwise sequential independent and any sequence of sequentially independent transformations can be applied in arbitrary order, yielding the same well-defined resulting net \cite{FAGT}, Diagram \ref{dia_Fwell_1} can be written as:
\begin{equation}\label{dia_Fwell_2}
\begin{gathered}
\xymatrix{
	&&&& M_j\ar@{=>}[drr]^{s_i}|{\SelectTips{cm}{12}\object@{/}}|{}&\\
	HN \ar@{=>}[rr]^{*} && M \ar@{=>}[urr]^{s_j} \ar@{=>}[drr]_{s_i} &&&& M_{ij}\ar@{=>}[rr]^{*}&&F\\
	&&&& M_i\ar@{=>}[urr]_{s_j}|{\SelectTips{cm}{12}\object@{/}}|{ }&
}
\end{gathered}
\end{equation}
Any two $s \in S$ are pairwise parallel independent, so are $s_i$ and $s_j$, thus their sequence is interchangeable $s_i$ is always applicable to $M_j$ and $s_j$ is always applicable to $M_i$ both always leading to the same net $M_{ij}$. So $F \equiv \widehat{F} $ for $HN \xRightarrow{\text{sr!}} F $ and $HN \xRightarrow{\text{sr!}} \widehat{F} $ and thus $F$ is well-defined up to isomorphism.
\end{proof}

\begin{theorem}[Equivalence of Transformation Unit Application and Flattening Process]

The transformation via transformation unit is equivalent to the flattening process from Definition \ref{flatdef}. So that from $HN \xRightarrow{\sum\nolimits_{s \in S} s} F$  and $HN  \xRightarrow{\text{sr!}} \widehat{F}$ follows $F \equiv \widehat{F}$.
\end{theorem}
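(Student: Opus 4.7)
The plan is to reduce the statement to the standard Parallelism/Church--Rosser machinery for $\M$-adhesive systems, using Lemma \ref{sindislemma} as the key independence input and Lemma \ref{fwelldef} as the key well-definedness input.

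First I would argue that the multiset of rule applications carried out by the transformation unit $HN \xRightarrow{\text{sr!}} \widehat{F}$ is exactly the set $S$ of substitutions. On the one hand, every $s=(sr,o) \in S$ remains applicable until it is applied: substitution rules are the only rules whose left-hand side contains the substitution transition $t$ with $sr=subst(t)$, and by Def.~\ref{d.hPN} substitution transitions never occur on the right-hand side of any rule (in particular, no rule can destroy the occurrence of $s$ without applying $s$ itself). Combined with Lemma~\ref{sindislemma}, no application of a different $s' \in S$ can remove the occurrence $o$, since removing $o(Net(t))$ would contradict parallel independence. Hence the termination condition of "!" forces every $s \in S$ to be applied at least once. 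Conversely, each substitution rule has exactly one injective occurrence (as stated right before Lemma~\ref{sindislemma}), and once that occurrence has been consumed the substitution transition $t$ is gone, so no $s \in S$ is applied twice. Therefore the transformation unit executes precisely a linearisation of $S$.

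Next I would invoke the Parallelism Theorem. Lemma~\ref{sindislemma} guarantees that the elements of $S$ are pairwise parallel independent, and by the Local Church--Rosser Theorem they are therefore pairwise sequentially independent as well. The Parallelism Theorem then yields, for any linear order $s_{i_1},\dots,s_{i_n}$ of $S$, an isomorphism between the sequential composition
\[
HN \xRightarrow{s_{i_1}} \cdots \xRightarrow{s_{i_n}} F'
\]
and the parallel transformation $HN \xRightarrow{\sum_{s \in S} s} F$; in particular $F \equiv F'$. Applying this to the linearisation produced by the transformation unit in step one gives an isomorphism $F \equiv \widehat{F}$ for that particular run of $\text{sr!}$.

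Finally, to rule out that different nondeterministic choices of the transformation unit give nonisomorphic results $\widehat{F}$, I would cite Lemma~\ref{fwelldef}, which already establishes that $\widehat{F}$ is well-defined up to isomorphism. Combining this with the isomorphism obtained above yields $F \equiv \widehat{F}$ for every execution of $HN \xRightarrow{\text{sr!}} \widehat{F}$.

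The main obstacle I expect is the first paragraph, i.e.\ matching the multiset of rule applications of the "!" unit with $S$. The Parallelism Theorem and Lemma~\ref{fwelldef} are essentially ready to use, but one must be careful that no non-substitution rule in $\mathcal{R}^N$ or $GR$ can interfere (they cannot, by the restriction in Def.~\ref{d.hPN} that substitution transitions are excluded from ordinary rules) and that the "as long as possible" strategy terminates exactly when $S$ has been exhausted. Once those bookkeeping points are settled, the theorem follows directly.
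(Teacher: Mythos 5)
Your proposal is correct and follows essentially the same route as the paper: show that the ``as long as possible'' unit applies exactly the substitutions in $S$, each exactly once, and then identify the result with the parallel transformation $HN \xRightarrow{\sum\nolimits_{s \in S} s} F$ using the pairwise independence from Lemma~\ref{sindislemma} (together with the well-definedness of $\widehat{F}$). The only cosmetic difference is in the ``at most once'' step: the paper argues it by computing that a substitution is not parallel independent of itself (the substitution transition lies in both occurrences but not in the interface $CP(t)$), whereas you argue directly that applying $s$ deletes the substitution transition and hence its unique occurrence --- the same underlying fact.
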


Since all $s \in S$ are independent from another, each $sr$ of the transformation unit  $HN  \xRightarrow{\text{sr!}} \widehat{F}$ can be applied at least once for each of its occurrences.
$HN \xRightarrow{\sum\nolimits_{s \in S} s} F$  can equivalently be applied as a transformation sequence $HN \xRightarrow{*} F = HN \xRightarrow{s_1} ... \xRightarrow{s_n} F$. So to prove that $F \cong \widehat{F} $ it is to show that no $sr$ under an occurrence $o$ can be applied more than once. Since $S$ contains all $sr$ with all their occurrences $o$, it is only to show that each $s \in S$ can be applied no more than once.

\begin{proof}

For any $s \in S$ to be able to be applied more than once it would have to be independent from itself.
Any two substitutions $s_1,s_2 \in S$ with $s_1 = s_2$ are parallel independent if  
\begin{align}
 o_{1}(ST_1) \cap o_{2}(ST_2) \subseteq o_{1}(l_1(CP_1)) \cap o_{2}(l_2(CP_2)) \label{s1s2target2}
 \end{align}
 holds true \cite{FAGT}.

When considering only transitions, since $CP_1$ and $CP_2$ only contain places and since $s_1 = s_2$ $ST_1$ and $ST_2$ contain the same substitution transition, it follows:

 \begin{align}
(o_{1T}(ST_1) \cap o_{2T}(ST_2) \neq \emptyset) \not \subseteq \emptyset = o_{1T}(l_1(CP_1)) \cap o_{2T}(l_2(CP_2)) \label{empty3} 
\end{align}
Thus equation \ref{s1s2target2} cannot hold true and any $s$ is not independent from itself and thus can only be applied once.
\end{proof}

\section{Related Work}
\label{s.relwork}

Besides hierarchical Petri nets based on transition substitution, nets based on place substitution and Object-Oriented Petri nets (\textit{OOPN}) were considered.
There are a number of tools similar to ReConNet. \textit{Snoopy} \cite{heiner2012} is a unifying Petri net framework with a graphical user interface. It allows the modeling and simulation of colored and uncolored Petri nets of different classes, supports analytic tools and the hierarchical structuring of models.

\textit{CPN tools} \cite{ratzer2003} is another tool for the modeling and simulation of colored Petri nets. Using a graphic user interface CPN tools features syntax checking, code generation and state space analysis.

 The \textit{HiPS} tool \cite{HiPS2017}  developed at the Department of Computer Science and Engineering, Shinshu University is a tool written in C\# and also employs a graphical user interface. HiPS is a platform for design and simulation of hierarchical Petri nets. It also provides functions of static and dynamic net analysis.

While all of these tools support the design of hierarchical Petri nets each lacks ReConNet's core feature the aspect of reconfigurability.

A use case for hierarchical Petri nets can be found in \cite{sun2014}. There hierarchical colored Petri nets are used to model the French railway interlocking system \textit{RIS} for  formal verification and logic evaluation. The RIS system is responsible for the safe routing of trains. Detailed verifications and evaluations are mandatory before deploying an RIS, since it is a safety critical system. The paper describes how the signaling control and the railway road layout are specified and constructed into a colored hierarchical Petri net. 

 \cite{zhang2009} uses hierarchical colored Petri nets to model the production process of a cold rolled steel mill. For this a crude description of the entire running process of the system is given at the main net, and the more detailed behaviors are specified in the subnets. It is shown that the design is highly consistent with real production, improving the development efficiency for production planning and scheduling.

\section{Conclusion}
\label{s.conc}
This paper provides  the basics of substitution transitions for \HRPN s. 
The main contribution is there a formal definition of the \HRPN s and its flattening construction.

This work presents a step to the  integration of reconfigurable hierarchical  Petri nets into the \textit{ReConNet} tool \cite{PEOH12,reconnet}.
Ongoing work will  accomplish support of hierarchical Petri nets in \textit{ReConNet}. 
First hierarchy needs to  be introduced into \textit{ReConNet} to allow transformation simulation, including an appropriate update to \textit{ReConNet}'s  persistence module to allow proper storing and restoring of hierarchical nets. Then individual rules are added to allow the functionality of a reconfigurable net. For net verification and validation purposes the flat representation of the \HRPN \ will be used.

\bibliography{subTMadTr} 
\bibliographystyle{alpha}

\newpage
\begin{appendix}
\section{Review of Decorated Place/Transition Nets}
\label{s.decoPT}

Let us revisit the algebraic notion of Petri nets. A marked place/transition net is given by
 $N=(P,T,pre,post,M)$ with pre and post domain functions $pre,post: T \to P^\oplus$ and a current  marking $M \in P^\oplus$, 
 where $P^\oplus$ is the free commutative monoid over the set $P$ of places.   
 For $M_1, M_2 \in P^\oplus$ we have $M_1 \leq M_2$ if $M_1(p) \leq M_2(p)$ for all $p \in P$.  A transition $t \in T$ is $M$-enabled for a marking
$M \in P^\oplus$ if we have $pre(t) \leq M$, and in this case the follower
marking $M^\prime$ is given by $M^\prime=M \ominus pre(t) \oplus post(t)$ and $M\fire{t} M^\prime$ is called firing step. 
Parallel firing of an firing vector $M\fire{v} M^\prime$ can be computed using the pre and post domain functions $M^\prime= M - pre^\oplus(v) +post^\oplus(v)$.\\
The transition labels may change when the transition fires.
This  feature has been introduced in \cite{Pad12} and most of the following section is from that paper. This feature  is important for the application of a rule after a transition has already fired and cannot be modelled  without changing the labels. Considering the tokens in the post place of the transition
does not work, because these tokens may be consumed as well. 
 The extension to changing labels is 
  conservative with respect to Petri nets as it does not alter the net's behaviour, but it is crucial for the control of rule application and transition firing.
  
Morphisms of decorated place/transition  nets are given as a pair of mappings for the places and the transitions, so that the structure and the decoration is preserved and the marking may be mapped strict, yielding an $\M$-adhesive category (see Lemma~1 in \cite{Pad12}). 
\begin{definition}[Decorated place/transition  net]   
A decorated place/transition  net  is a marked P/T net $N=(P,T,pre,post,M)$ together with
			\begin{itemize}
				\item a capacity as a function $cap :P\to \Nat$ 
				\item $A_P$, $A_T$  name spaces with 
				      $\pl:P \to A_P$ and $\tl:T \to A_T$
			  \item the function $tlb: T \to W$ mapping transitions to transition labels $W$ and
			  \item the function
		          $rnw: T \to END$ where $END$ is a set containing some endomorphisms on
		          $W$, so that 
		             $rnw(t): W \to W$ is the function that renews the transition label.
			\end{itemize}
\end{definition}
The firing of these nets is the usual for  place/transition nets except for changing the transition labels.
 Moreover, this extension works for parallel firing as well.
\begin{definition}[Changing Labels by Parallel Firing]
   Given a transitions vector $v= \sum_{t \in T} k_t \cdot t$
   then the label is renewed by firing  $tlb \fire{v} tlb'$ and for each $t \in T$ the transition label 
   $tlb':T \to W$ is defined by:
    $$tlb'(t) = rnw(t)^{k_t} \circ tlb (t)$$
\end{definition}
 
In order to define rules and transformations for decorated place/transition  nets we 
introduce morphisms that map transitions to transitions  by $f_T$  and places to places by $f_P$. 
The later is extended to linear sums by $f_P^\oplus$. 
These morphisms preserve firing steps by Condition (\ref{i}) and all annotations by Condition (\ref{ii}-\ref{iv}) below. 
Since Condition (\ref{iv}) preserves the transition labels, these labels only can be changed by firing the corresponding transition, but not by transformations. Additionally, these morphisms require that the  marking at corresponding places is not decreased (Condition (\ref{v})). For strict morphisms, in addition injectivity and the preservation of markings is required (Condition (\ref{vi})).

\begin{definition}[Morphisms between  decorated place/transition nets]
Given two decorated place/ transition  nets $N_i =( P_i,T_i, pre_i, post_i,M_i,cap_i, pname_i,tname_i, tlb_i, rnw_i)$ for $i=1,2$
then 

$f:N_1 \to N_2$ is given by $f=(f_P:P_1 \to P_2,f_T:T_1 \to T_2)$ and the following equations hold: 
\begin{enumerate}
	\item \label{i}   $pre_2 \circ f_T = f_P^\oplus \circ pre_1$ and $post_2 \circ f_T = f_P^\oplus \circ post_1$ 
	\item \label{ii}  $cap_1 = cap_2 \circ f_p$ 
	\item \label{iii} $pname_1 = pname_2 \circ f_P$ 
	\item \label{iv}  $tname_1 = tname_2 \circ f_T$ and  $tlb_1 = tlb_2 \circ f_T$ and $rnw_1 = rnw_2 \circ f_T$
	\item \label{v} $M_1(p) \le M_2(f_P(p))$ for all $p \in P_1$
\end{enumerate}
Moreover, the morphism $f$ is called strict 
\begin{enumerate}
   \setcounter{enumi}{5}
	\item \label{vi} if  both  $f_P$ and $f_T$  are injective and 
 $ M_1(p) =M_2(f_P(p))$  holds for all $p \in P_1$.
\end{enumerate}
 Decorated place/transition nets together with the above morphisms yield the category $\cdecoPT$.
\end{definition}

\section{Review of $\M$-Adhesive Transformation Systems}
\label{s.madTS}
This section can be found in \cite{Padberg15} as well.

The theory of $\M$-adhesive transformation systems\footnote{See page 2 in \cite{EGHLO14} for the relation to other types of HLR systems.} has been developed as an abstract framework for 
different types of graph and Petri net 
transformation systems \cite{FAGT,EGH10}. They have been instantiated with various 
graphs, e.g., hypergraphs, attributed and typed graphs, but also with  structures, algebraic specifications and 
various Petri net classes, as  elementary nets, place/transition nets, Colored Petri nets, 
or algebraic high-level nets \cite{FAGT}.
The fundamental construct for $\M$-adhesive categories and 
systems are $\M$-van Kampen squares \cite{LS05,EGH10}\footnote{For a discussion of the various adhesive categories see page 6 in \cite{EGHLO14}.}

\begin{definition}[\M-Van Kampen square]\label{d.VK}
A pushout (\ref{eq.madhPO})
with $m \in \M$  is an \M-van Kampen square, if for 
any commutative cube (\ref{eq.madhVK}) with (\ref{eq.madhPO}) in 
the bottom \\and  
the back faces being pullbacks, the following holds:\\
 the top is pushout $\Leftrightarrow$ the
front faces are pullbacks.\\[-15mm]
\begin{minipage}[t]{0.25\textwidth}
				\begin{equation}
	         \label{eq.madhPO}
  $$
   \xymatrix@=5mm{
   			&\\
                       A  \ar[rr]|{m\in\M} \ar[dd]|{f}    
	         && B                   \ar[dd]|{g}     \\ 
	            &  \\
		       C  \ar[rr]|{n}
		    && D
		 }
     $$
		\end{equation}
		\end{minipage} 		
		\hfill~ \\[-45mm] \hspace*{7cm}
    \begin{minipage}[t]{0.45\textwidth}
				\begin{equation}
	         \label{eq.madhVK}
  $$
    \xymatrix@=3mm{
                     &&&    A'	 \ar[ddd]|<<<<<{a}    \ar[dlll]|{f'}    \ar[drr]|{m'}     
		         && \\
		                C'      \ar[ddd]|{c}                    \ar[drr]|{n'}
		   &&&&& B'      \ar[ddd]|{b}     \ar[dlll]|<<<<<{g'}          
		           &&                                                                \\
		          && D'      \ar[ddd]|<<<<<{d}                                        \\
                        &&& A                       \ar[dlll]|>>>>>{f}     \ar[drr]|{m}           \\
		                C                                      \ar[drr]|{n}
		   &&&&& B                       \ar[dlll]|{g}                       
		           &&                                                              \\
		          && D
		   } 	        	 
  $$
		\end{equation}
		\end{minipage} 
\end{definition}
\M-adhesive transformation systems can be seen as an abstract 
 transformation systems in the double pushout approach based on \M-adhesive  categories \cite{EGH10}.
\begin{definition}[$\M$-Adhesive  Category]
\label{d.madHLR}
A class $\M$ of monomorphisms in $\cC$ is called PO-PB compatible, if
\begin{enumerate}
	\item Pushouts along \M-morphisms exist and \M \ is stable under pushouts.
  \item Pullbacks along \M-morphisms exist and \M \ is stable under pullbacks.
  \item \M \ contains all identities and is closed under composition.
\end{enumerate}
 Given a PO-PB
compatible class $\M$ of monomorphisms in $\cC$, then $(\cC,\M)$ is called
$\M$-adhesive category, if pushouts along \M-morphisms are \M-Van Kampen  squares (see Def.~\ref{d.VK}).
 An \M-adhesive 
transformation system $AHS = (\cC,\M,P)$  consists of an $\M$-adhesive category $(\cC,\M)$  and a set of rules $P$. 
\end{definition}

\begin{remark}
The following kinds of Petri nets yield $\M$-adhesive categories:
\begin{itemize}
	\item PT nets and morphisms as given in Sect.~\ref{s.recPN} yield an $\M$-adhesive  category $\cPT$ (see \cite{FAGT}).  
\item Algebraic high-level nets as given in Sect.~\ref{s.recPN} have been shown in  \cite{Prange08} to be an $\M$-adhesive  category $\cAHL$ for $\M$ being the class of strict morphisms\footnote{In \cite{Prange08} they are called AHL-systems with  morphisms
that are isomorphisms on  the algebra part.}.
\item In \cite{Pad12} it is shown that
decorated place/transition nets yield an $\M$-adhesive  transformation category  $\cdecoPT$ for $\M$ being the corresponding class of strict morphisms. 
\end{itemize}

\end{remark}

\section{Additional requirements}
\label{s.addProp}

\begin{remark}
 To obtain the results for nets with subtyping  the following additional properties for the class $\M$-morphism. see\cite{FAGT}:
\begin{itemize}
	\item $\E'$-$\M'$pair factorization with $\M$-$\M'$PO-PB decomposition
 \item Initial pushouts over $\M'$-morphisms
 \item Coproducts compatible with $\M$
\end{itemize}
\end{remark}

%

\end{appendix}
\end{document}